\tikzset{
	treenode/.style = {shape=rectangle, rounded corners,
		draw, align=center,
		top color=white, bottom color=blue!20},
	troot/.style     = {treenode, font=\Large, bottom color=red!30},
	tenv/.style      = {treenode, font=\ttfamily\normalsize},
	tleaf/.style      = {treenode, font=\ttfamily\normalsize, bottom color=green!30},
	tdummy/.style    = {circle,draw}
}
\definecolor{myurlcolor}{rgb}{0,0,0.7}
\definecolor{myrefcolor}{rgb}{0.8,0,0}
\newtheorem{lemma}{Lemma}
\newtheorem{corollary}{Corollary}
\newtheorem{theorem}{Theorem}
\newtheorem*{theorem*}{Theorem}
\newtheorem{observation}{Observation}
\newtheorem{remark}{Remark}
\newtheorem{example}{Example}
\def\>{\rangle}
\def\<{\langle}
\begin{document}

\title{Semi-Device Independent Quantum Money}

\author{Karol Horodecki}
\email[E-mail: ]{khorodec@inf.ug.edu.pl}
\affiliation{Institute of Informatics, National Quantum Information Centre, Faculty of Mathematics, Physics and Informatics, University of Gda{\'n}sk, Wita Stwosza 57, 80-308 Gda{\'n}sk, Poland}
 
\author{Maciej Stankiewicz}
\email[E-mail: ]{maciej@stankiewicz.edu.pl}
\affiliation{Institute of Mathematics, National Quantum Information Centre, Faculty of Mathematics, Physics and Informatics, University of Gda{\'n}sk, Wita Stwosza 57, 80-308 Gda{\'n}sk, Poland}

\date{March 19, 2019}
	
\begin{abstract}
The seminal idea of quantum money not forgeable due to laws of Quantum Mechanics proposed by Stephen Wiesner, has laid foundations for the Quantum Information Theory in early '70s. Recently, several other schemes for quantum currencies have been proposed, all however relying on the assumption that the mint does not cooperate with the counterfeiter. Drawing inspirations from the semi-device independent quantum key distribution protocol, we introduce the first scheme of quantum money with this assumption partially relaxed, along with the proof of its unforgeability. Significance of this protocol is supported by an impossibility result, which we prove, stating that there is no both fully device independent and secure money scheme. Finally, we formulate a quantum analogue of the Oresme-Copernicus-Gresham's law of economy.
\end{abstract}

\maketitle

\section{Introduction}

Quantum Information Science originates from the seminal idea of the scheme of quantum money due to Stephen Wiesner \cite{Wiesner1983}. According to his brilliant concept, the randomly polarized photons could in principle represent the banknote, while the Bank's secret key represents the random choices of polarizations. During verification the Bank checks and accepts the banknote if the photons appear to be polarized as they have been designed and rejects otherwise. Although it is rather intuitive that due to quantum no-cloning \cite{nocloning1,nocloning2,nocloning3} the banknote can not be forged without disturbing it, this scheme, has been proven to be secure against counterfeiter only recently \cite{Molina2013}. Wiesner's scheme bases strongly on the assumption that measurements of the verification are performed according to the specification. Dmitry Gavinsky \cite{Gavinsky2012} has designed a protocol powerful enough to drop this assumption. However security of the latter relies on the honesty of the provider of the banknotes (a possibly malicious \textit{mint}). 

In this manuscript we initiate the study on security of money schemes against \textit{joined} attack of the \textit{mint} and \textit{counterfeiter} who can collaborate by changing the functionality of the inner workings of the terminal that verifies the banknote. We first observe, that a scheme of money unconditionally secure against the joined attack does not exist.  In spite of this fact, we propose
a money scheme with relaxed assumptions about untrusted source \textit{and} untrusted
measurements, and prove its security against a wide class of attacks.
More precisely we show how to change the verification procedure of  Wiesner's banknote in order to assure its security against a variant of the joined attack - the production and counterfeiting performed in a \textit{qubit-by-qubit} manner.

It is easy to see, that protection of Wiesner's banknotes against the joined attack, demands a sufficient control over the banknote state's dimension. At the same time for the banknote to be protected its state  can not be classical (diagonal in single basis). In both cases we demonstrate straightforward attacks. We then start from an observation that the well known quantum cryptographic scheme, the protocol of Paw\l{}owski and Brunner of semi-device independent quantum key distribution (SDI QKD) \cite{sdi} matches these two cases. It (i) assumes that the traveling quantum data have bounded dimension (in considered case the bound is $2$ i.e., we consider qubits) and (ii) assures, via testing an equivalent of a \textit{dimension witness}, that the data are not classical. 

We first note, that according to the honest implementation of the SDI QKD protocol, the sender-receiver state, is exactly in form of Wiesner's banknote. We further propose that the verification of this banknote should be performed exactly as it is done during verification of the SDI QKD protocol. In the latter the honest measuring device does not check correctness of the correlations in the two polarization bases of the original banknote's state, but in the \textit{rotated} bases \cite{CHSH}  as it is specified by the honest implementation of the SDI QKD protocol.

The security analysis of the SDI money scheme needs to take into account that the receiver (Alice) in the corresponding SDI QKD, is \textit{not trusted}. In that sense, the schemes of money are two-party cryptographic problems. In a single verification of a banknote Alice is asked to give certain answers (guessing bits of the key of one branch of the Bank). Upon successful verification, in order to pass the verification for the second time in the other branch she can copy the correct answers given from the first one. We are able to find \textit{the necessary and sufficient value of the threshold} $\theta$ in corresponding SDI QKD protocol which guarantees protection against forgery in the SDI money scheme. Namely, we prove that the owner of a single banknote can not be accepted in two (or a reasonable, i.e., polynomial in banknote's length number of) branches of the Bank, that employs verification with this threshold $\theta$. It is sufficient and also necessary that the threshold is the one that \textit{in the corresponding SDI QKD protocol would imply more than a half of the maximal key rate.} It is important to note that in the SDI money scheme only the preparation and the verification part of the corresponding SDI QKD is performed, while the privacy amplification and information reconciliation is not done. In particular the number of runs is only big enough in order to collect sufficient data for tomography of the guessing probability.

Due to the fact that we base on the original SDI QKD protocol \cite{sdi}, the SDI money scheme inherits the similar security level, which in our context  we call the \textit{qubit-by-qubit} counterfeiting. Major property of this attack is that the malicious mint, during production of Wiesner's banknote can perform a qubit in a different state than given in the specification, independently in each round. Later a  counterfeiter can again try to copy the banknote by individual copy operations applied on each qubit separately. It is also vital for the scheme that the collaboration of the mint and counterfeiter is restricted in a way that mint does not pass a state entangled with the banknote to the counterfeiter. We prove the security in this scenario for the case when the Bank verifies the banknote. We then present also a bit more relaxed case when the counterfeiter can lie in some way about the classical data generated during verification of the banknote. It is known that the banknote in the original Wiesner scheme needs to be destroyed. In the considered case by the nature of measurement test incompatible with the basis of the banknote the honest verification destroys the banknote.

A number (in fact, more than 20) of various quantum money schemes has been recently proposed \cite{Wiesner1983, Bennett1983, AQCash, Aaronson2009, AMS-MosSte10, LutomirskiAFGKHS10, Gavinsky2012, Pastawski2012, Aaronson2012, Farhi2012, Molina2013,  2016arXiv160401383J, 2017arXiv171102276Z, 2017arXiv170804955I, qulogicoin, 1806.05884, 1809.05925, qptmoney}. We then ask if the Oresme-Copernicus-Gresham (OCG) Law of economy (known also as the Gresham's Law \cite{CarolinaSparavigna2014, GreshamArystofanes, GreshamOresme, GreshamCopernicus, Gresham1, Gresham2}) will be also applicable to the quantum schemes of money. If so, \textit{the Quantum Oresme-Copernicus-Gresham law} would have a form:
\begin{align}
	\text{\textbf{\textit{ Bad quantum money drives out}}} \nonumber \\
	\text{\textbf{\textit{ good quantum one}}}\nonumber
\end{align}
We exemplify this general hypothesis on the base of presented 
scheme: the realizations of SDI money schemes with acceptance level $\theta' \geq \theta $ would drive out SDI money schemes with higher acceptance level $\theta''>\theta'$. This is because the banknotes in the latter schemes could be more robust to noise, hence they could in principle by stored for longer. One can expect that in analogy to the OCG law, individuals would tend to keep rather the banknotes that are more robust to noise banknotes, while spending the less robust once more often.

The manuscript is organized as follows. In Section \ref{sec:previous} we review previous quantum money schemes both in private key and public key settings.
In Section \ref{sec:sdi} we present the main results of that work, stating a scheme for a semi-device independent quantum money and providing impossibility proof for fully device independent money schemes.
In Section \ref{sec:gresham} we discuss a possible quantum analogue of the Oresme-Copernicus-Gresham Law. We conclude in Section \ref{sec:discusion} by comparing our scheme to the existing ones, discussing the technological difficulties in possible implementation, and summarizing the paper with some interesting open problems. Additionally, in Appendices \ref{app:def}, \ref{app:lem}, and \ref{app:main} we present a rigorous security proof of the scheme, briefly describe honest implementation in Appendix \ref{aap:honest}, and discuss the amount of required memory in Appendix \ref{app:mem}.

\section{Previous works}\label{sec:previous}

An idea of quantum money proposed by Stephen Wiesner was to our knowledge the first application of the quantum effects to the information theoretic, in fact cryptographic task. In this section we will discuss the previous research in this topic using division into private and public key quantum money suggested by Aaronson \cite{DBLP:journals/eccc/Aaronson16, Aaronson2009}. In the private key quantum money schemes only the mint itself can verify the banknote. On the other hand, in the public key quantum money schemes anyone can verify the banknote using publicly available verification procedure, but still no one, except the mint, cannot copy or create new banknote. We will conclude by giving (in Figure \ref{fig:monies}) a comprehensive comparison of different classical and quantum moneys scheme together with their security assumptions.

\begin{figure*}
	\begin{tikzpicture}[
	grow                    = right,
	edge from parent/.style = {draw, -latex},
	every node/.style       = {font=\footnotesize},
	sloped,
	level 1/.style={sibling distance=14em, level distance=4em},
	level 2/.style={sibling distance=8em, level distance=6em},
	level 3/.style={sibling distance=3em, level distance=16em},
	level 4/.style={sibling distance=6em, level distance=16em}
	]
	\node [troot] {Money}
	child {
		node [tenv, bottom color=yellow!20] {quantum} 
		child { 
			node [tenv, bottom color=yellow!20] {private-key} 
			child { 
				node [tenv, bottom color=yellow!20] {\st{device independent}} 
				child {node [tleaf, bottom color=yellow!20] {$\mathbf{This \text{ } work}$ (See \ref{sec:protocol})} edge from parent}
				edge from parent
			}
			child { 
				node [tenv, bottom color=yellow!20] {semi-device independent} 
				child {node [tleaf, bottom color=yellow!20] {$\mathbf{This \text{ } work} $ (See \ref{sec:sdi})} edge from parent}
				edge from parent
			}
			child { 
				node [tenv] {measurement-device independent} 
				child {node [tleaf] {See section \ref{sec:privatekey}} edge from parent}
				edge from parent
			}
			child { 
				node [tenv] {device dependent}
				child {node [tleaf] {See section \ref{sec:privatekey}} edge from parent}
				edge from parent
			}
			edge from parent
		}
		child { 
			node [tenv] {public-key} 
			child { 
				node [tenv] {delocalized} 
				child {node [tleaf] {Quantum Bitcoin (See \ref{sec:publickey})} edge from parent}
				edge from parent
			}
			child { 
				node [tenv] {localized}
				child {node [tleaf] {See section \ref{sec:publickey}} edge from parent}
				edge from parent 	
			}
			edge from parent
		}
		edge from parent
	}
	child { 
		node [tenv] {classical} 
		child {
			node [tenv] {digital}  
			child { 
				node [tenv] {cryptocurrencies} 
				child {node [tleaf] {Bitcoin, Ethereum, \ldots} edge from parent}
				edge from parent
			}
			child { 
				node [tenv] {centralized} 
				child {node [tleaf] {CC, online banking, transfers} edge from parent}
				edge from parent
			}
			edge from parent
		}
		child { 
			node [tenv] {physical} 
			child { 
				node [tenv] {fiat money} 
				child {node [tleaf] {USD, EUR, CNY, PLN, \ldots} edge from parent}
				edge from parent 	
			}
			child { 
				node [tenv] {commodity money} 
				child {node [tleaf] {Gold, sliver, salt, \ldots} edge from parent}
				edge from parent	
			}
			edge from parent
		}	
		edge from parent
	};
	\end{tikzpicture}
	\caption{%
		Types of moneys and their security against forgery: 
		commodity money security is based only on its high intrinsic value.
		Fiat money's security depends mostly on secret products and procedures used in the money making process. For example the paper recipe or the paint chemical composition is kept secret for banknotes. It is worth mentioning that this is against Kerckhoffs's principle.
		Digital money security follows from hardness assumptions for some computational, probably NP-intermediate, problems. In practice RSA algorithm and blockchain techniques are used for which effective attacks, using quantum computer, were proposed.
		Also in case of the public-key quantum money computational assumptions are necessary, but it is unclear if any concrete scheme remains secure.
		Finally the private-key quantum money are information-theoretic secure without any hardness assumptions. Nevertheless we have to consider real life implementation using specific, possibly untrusted, hardware and software. 
	}
	\label{fig:monies}
\end{figure*}
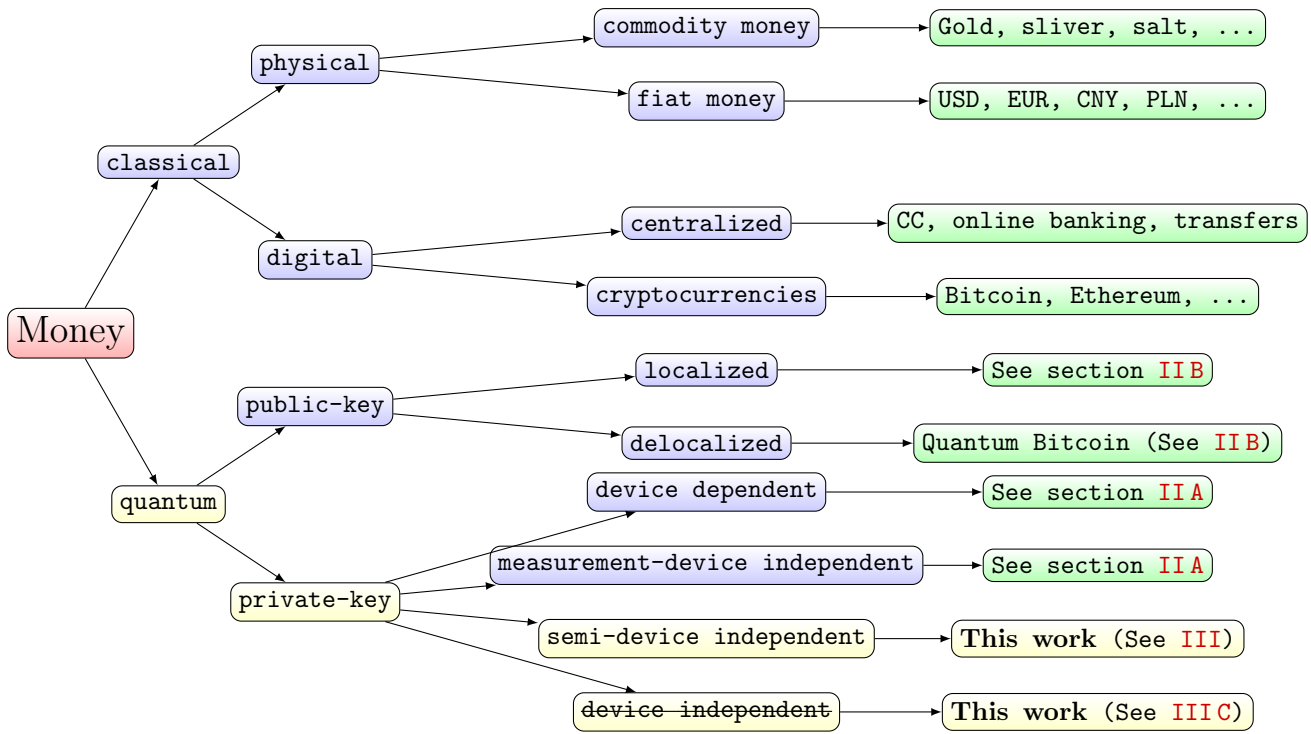

\subsection{Private key quantum money}\label{sec:privatekey}

Around 1970 Stephen Wiesner suggested the first scheme of unforgeable quantum money. Unfortunately his paper was rejected few times and finally was published in 1983 \cite{Wiesner1983}. Even though Wiesner claimed that the protocol is unconditionally secure, a full proof for the most generalized attacks was presented by Molina et al.\ in 2013 \cite{Molina2013}. 

Because of the fact that the scheme requires the mint to maintain a huge database for all produced bills, Bennet et al.\ \cite{Bennett1983} proposed a modification of the protocol, using a cryptographic pseudorandom function, to decrease needed amount of the memory. The question if it is possible to reduce the database size without imposing any computational assumptions was analyzed by Aaronson \cite{DBLP:journals/eccc/Aaronson16}. Later he formally proved that the answer is negative and stated so called Tradeoff Theorem for Quantum Money \cite{1711.01053}. 

Although the above schemes are secure in a regime in which the mint destroy the banknote after verification, allowing to retrieve verified bill is dangerous. So called interactive attacks were independently proposed by Aaronson \cite{Aaronson2009} and Lutomirski \cite{1010.0256}. Even more sophisticated version of interactive attack, based on idea of Elitzur-Vaidman bomb tester, was later suggested by Nagaj et al.\ \cite{Nagaj:2016:AAW:3179330.3179337}. 

The above mentioned scenarios requires visiting the mint, or at least having secure quantum channel so Gavinsky suggested version of quantum money with classical verification \cite{Gavinsky2012}. It is important to notice that in Gavinsky's scheme the Bank does not need to trust the measurement device. Such type of schemes are commonly called measurement-device independent. Another similar scheme was also presented by Georgiou and Kerenidis \cite{georgiou}.

Additionally Pastawski et al.\ \cite{Pastawski2012} and more recently Amiri and Arrazola \cite{Amiri2017} analyze more realistic scenario in the presence of noise and errors. 

It is also worth to mention about fundamentally different approaches aimed at anonymity. Mosca and Stebila \cite{AMS-MosSte10}, (see also Tokunaga et al.\ \cite{AQCash}) proposed quantum coins in such a way that all coins are identical. Their scheme uses black box model that makes thorough security analysis difficult.

Furthermore Selby and Sikora \cite{qptmoney} analyzed unforgeable money in the Generalized Probabilistic Theories.

It is important to point out that also experimental results in quantum money field were presented by three groups of Bartkiewicz et al.\ \cite{Bartkiewicz2017}, Bozzio et al.\ \cite{Bozzio2018} and Guan et al.\ \cite{Guan2018} respectively. Although the theoretical schemes are secure, in the case of real life implementation new vectors of attack could appear. For example Jiráková et al.\ \cite{1811.10718} show how implementation by Bozzio et al.\ \cite{Bozzio2018} could be attacked.

Finally, soon after the first version of this paper was published on ar$\chi$iv preprints repository, Bozzio et al.\ \cite{1812.09256} presented the result with similar title ``Semi-device-independent quantum money with coherent states'' \footnote{which we would prefer to call ``Partially-device independent'' since the name of ``Semi-device independent'' already have well-established position in quantum information field when referring to schemes similar to the one by Paw\l{}owski and Brunner \cite{sdi}.}. They result requires stronger security assumptions but is more focused toward realistic implementations.

\subsection{Public key quantum money}\label{sec:publickey}

The biggest drawback of all private key quantum money schemes is that only the mint can verify the bill. To get rid of this problem, an idea of a public key quantum money, was invented. In that approach not only the mint, but anyone, even untrusted party, could verify the quantum banknote without communication with the mint. General formulation of the public key quantum money was presented by Aaronson \cite{Aaronson2009} and later it security was analyzed by Aaronson and Christiano \cite{Aaronson2012}. 

Following these seminal results many candidates for the private key quantum money scheme was presented. The first such scheme, based on stabilizer states, was proposed by Aaronson \cite{Aaronson2009} but it was later broken by Lutomirski et al.\ \cite{LutomirskiAFGKHS10}. There were also some attempts exploring an idea of local hamiltonians problem that can also be broken using a single-copy tomography presented by Farhi et al.\ \cite{Farhi2010}. Another idea, based on knot theory, was proposed by Farhi et al.\ \cite{Farhi2012}. It remains unbroken but there is no full security proof. 

Until now more papers concerning the public key quantum money or an analysis of its security was published that we should point out here \cite{2011arXiv1107.0321L, Aaronson20122, Pena2015}.

In the most recent work, Mark Zhandry \cite{2017arXiv171102276Z} proved that if the injective one-way functions and a indistinguishability obfuscator exist, then the scheme of the public key quantum money exists. Furthermore he shows how to adapt the Aaronson and Christiano's scheme \cite{Aaronson2012} using these assumptions to get the secure public key quantum money.

We should also mention an ongoing research on decentralized quantum currencies. First Jogenfors \cite{2016arXiv160401383J} proposed Quantum Bitcoin that connects ideas of quantum money and classical blockchain system like the one used in Bitcoin. Later Ikeda \cite{2017arXiv170804955I} presented another approach called qBitcoin based on quantum teleportation and a quantum chain, instead of the classical blocks. Also a cryptocurrency called qulogicoin, based on another version of the quantum blockchain, was also proposed by Sun et al.\ \cite{qulogicoin}.
Recently Adrian Kent proposed a concept of ``S-money'' \cite{1806.05884} and Daniel Kane created a new money scheme based on modular forms \cite{1809.05925}.
Finally Andrea Coladangelo proposed decentralized, blockchain based hybrid classical-quantum payment system \cite{1902.05214} strengthening quantum lightning scheme.

\section{Semi-device independent quantum money}\label{sec:sdi}

In this section we first demonstrate simple attacks on some of the private key
quantum money schemes, including Wiesner's and Gavinsky's ones, which are based on the \textit{cooperation of the mint and the counterfeiter}. Next we discuss why it is impossible to make fully device independent money scheme in Sec. \ref{sec:DI}. We then describe the scheme for semi-device independent quantum money in Sec \ref{sec:protocol}. Next we compare our money scheme with the corresponding SDI QKD protocol of Paw\l{}owski and Brunner, that we use as a base (see Sec. \ref{sec:compare}). Finally in Sec. \ref{sec:proof} we show the idea of the proof, details of which are presented in the appendices.

\subsection{Simple joined attacks: when the mint and counterfeiter collaborate}

We aim to demonstrate that both the original Wiesner scheme and that of Gavinsky are vulnerable to the joined attack. Moreover the attack is general enough to apply to other private quantum money schemes, as it bases on dropping important security assumption: the privacy of the key. Before presenting the attacks we recall how a honest source prepare the state:
\begin{equation}
\begin{split}
&\sum_{(b,v)\in \{0,1\}^{2k}} \frac{1}{4^k} \ket{b,v}\bra{b,v} \xrightarrow{\text{honest}} \\
&\sum_{(b,v)\in\{0,1\}^{2k}} \frac{1}{4^k} |b,v\>\<b,v|\otimes \ket{\rho^W_{(b,v)}}\bra{\rho^W_{(b,v)}}
\end{split}
\end{equation}
 where $\rho^W_k \in\{ \ket{0}, \ket{1}, \ket{+}, \ket{-} \}$, the bit-string $b$ tells the (random) choice of basis, while $v$ corresponds to outcomes. In the original money only the system $W$ contains banknote's state.
 
Now we are ready to show here three attacks of different types. 
The first one enlarges the memory of the banknote, the second one uses additional entanglement and the third makes it a classical state. The first reduces to simple imprinting of the secret key of the Bank directly in banknote's state. This is at the expense of enlarging dimension
of its quantum memory:
\begin{equation}
	\begin{split}
	&\sum_{(b,v)\in \{0,1\}^{2k}} \frac{1}{4^k} \ket{b,v}\bra{b,v} \xrightarrow{\text{attack}_1} \\
	&\sum_{(b,v)\in\{0,1\}^{2k}} \frac{1}{4^k} |b,v\>\<b,v|\otimes \ket{\rho^W_{(b,v)}}\bra{\rho^W_{(b,v)}}\otimes \ket{b}\bra{b}_{H}
	\end{split}
\end{equation}

The mistrustfully prepared banknote has an additional ``hidden'' register $H$ enabling the attack. This register can be used to generate unlimited number of identical banknotes via repetitive von Neumann measurement of system $W$ in the basis indicated by vector $|b\>\<b|_H$. Allowing for such a strong attack, one can imagine that in principle the whole string $|b,v\>$ could be also imprinted in money's memory at a price of doubling it, however imprinting $|b\>$ is enough.
Operations of copying such a ``banknote'' can pass unnoticed from the point of view of the honest Client. From this trivial example we have then learned that in the case of an unbounded dimension of the banknote, its security against forgery is compromised.

The second attack does not require extra memory in the banknote's state but makes use of an additional entanglement between adversary and the untrusted source device. 
Instead of preparing one from the four honest states \(\{ \ket{0}, \ket{1}, \ket{+}, \ket{-} \}\) the source performs locally one from the four Pauli unitaries on a half of a singlet and sends it as a money state. Later the adversary performs global Bell measurement on whole two qubit system. That procedure based on superdense coding \cite{Bennett1992SDC} allows him to obtain both parts of the secret key and prepare arbitrary number of valid banknotes.

In the third one, the mint and the counterfeiter can attack jointly without increasing the memory of the banknote and furthermore, without any additional entanglement, by using only classical states (diagonal in a single basis):
\begin{equation}
	\begin{split}
	&\sum_{(b,v)\in\{0,1\}^{2k}} \frac{1}{4^k} |b,v\>\<b,v| \xrightarrow{\text{attack}_3}\\ &\sum_{(b,v)\in\{0,1\}^{2k}} \frac{1}{4^k} |b,v\>\<b,v|\otimes |v\>\<v|_H.
	\end{split}
\end{equation}
In each run, right before the measurement is physically done, the measurement device  
is given the type of basis $b$ taking value $0$ in case of $\{|0\>,|1\>\}$ and $1$ for  $\{|+\>,|-\>\}$. It can then safely output the value $|v\>\<v|_H$ as a good answer. The two bits that cannot be encoded in $1$ qubit are split into measurement type (revealed later), and its outcome. It is important to notice that this attack will not work in Wiesner's scheme since there the Bank makes measurement itself, but it will work in some schemes money schemes with classical verification.

The scheme of money that we propose (see Section \ref{sec:protocol})  bases on the semi-device independent quantum key distribution protocol that matches as a partial countermeasure to  these three attacks. In the latter protocol one assumes that there are only qubits sent, so the first attack (by enlarging memory) is not applicable. On the other hand, SDI QKD protocol gets accepted only if the data coming from quantum states is observed, i.e., that the
systems communicated were not classical bits, disabling thereby the second attack.
This, and the fact that the honest implementation of the quantum states 
processed by the parties in the SDI QKD are Wiesner's money, motivates
us to study security of Wiesner's scheme under the verification of the SDI QKD protocol (as we describe in detail in Section \ref{sec:protocol}). Before that, in the next section, we will present additional result, no-go for Device-Independent quantum money which highlights the importance of our money scheme.

\subsection{Impossibility of fully device independent quantum money} \label{sec:DI}

	In this section we will show that it is impossible to create a fully device independent money scheme. We will prove this in a scenario where the Bank has at least two branches that do not communicate during the verification phase. In device independent approach both source and measurement devices are untrusted and there are no restrictions on state dimension or additional entanglement, opposed to the semi-device independent approach that we will present in the next sections. On the other hand, we allow all Bank's branches to have shared randomness that can be used both in the state preparation and verification phases. We also assume that the no-signaling condition is fulfilled and post processing is honest, which is the standard approach in most device independent protocols.

	\begin{observation}[No-go for device independent quantum money]
	\label{obs:nogo}
	It is impossible to create fully device independent money scheme with untrusted source and measurement devices that could be produced by adversary.
\end{observation}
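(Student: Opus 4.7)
My plan is to prove the impossibility by exhibiting a universal joined attack that defeats any purported fully device-independent money scheme. The key observation is that if both the source and the measurement apparatus are manufactured by the adversary, then the adversary can always secretly replace the intended quantum protocol by a classical one of its own design; classical information is trivially clonable, and the Bank is left without any trusted reference with which to detect the substitution. The construction therefore reduces to showing that one can simulate an arbitrary honest acceptance pattern with purely classical hidden data.

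Concretely, I would let the malicious mint pick a string $r$ (of whatever length the scheme uses for a banknote record) and program the source to output the product state $|r\rangle\langle r|$ in the computational basis while storing $r$ in a private register that is handed to the counterfeiter. The matching verification device at each branch is programmed to measure in the computational basis and accept if and only if the outcome agrees with the expected value of $r$; the necessary agreement between the source and the various verifiers is routed through the shared randomness that the setting already permits between Bank branches, into which the adversary embeds its own classical keys. The counterfeiter, who has been told $r$ by the mint, then prepares an unlimited number of classical copies of $|r\rangle\langle r|$ and presents them to any collection of non-communicating branches, each of which accepts. Correctness of the scheme on honest runs holds by construction, no-signaling between branches is trivially preserved (the devices are classical), and the classical post-processing is honest, so none of the standing assumptions of the DI setting is violated.

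The main obstacle is conceptual rather than technical: one must argue that this cannot be ruled out by any clever verification procedure the Bank might try to impose. I would close the argument by remarking that, from the Bank's point of view, the two untrusted devices of a DI scheme jointly form a single black box whose input-output behavior is chosen by the adversary; the adversary is free to pick a behavior that reproduces the protocol's acceptance statistics on honest runs while being entirely classical internally. Since any Bell-type or dimension test the Bank could conceive of must itself be executed on adversarial hardware, no statistics collected during verification can separate this attack from an honest execution, and therefore no scheme can simultaneously achieve correctness and unforgeability in the fully device-independent regime. This also motivates the semi-device-independent restriction, where bounding the transmitted dimension gives the Bank exactly the missing trusted handle.
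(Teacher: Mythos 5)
Your core intuition is the right one --- with all hardware adversarial and no communication between branches during verification, the Bank has no trusted reference and duplication cannot be detected --- but the specific universal attack you build (full classicalization of the source and verifier) proves less than the Observation claims, and differs from the paper's argument in a way that matters. The paper does not classicalize anything: it notes that a correct scheme must admit an honest quantum implementation $H_1$ that passes verification, and the adversary simply runs that honest preparation \emph{twice} on the same Bank input, handing out the product $H_1\otimes H_1$. Each non-communicating branch then sees exactly the honest marginal and accepts. This is strictly more robust than your construction: it works verbatim no matter what the verification procedure is, because it never has to argue that the honest statistics are classically reproducible.

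Your classical substitution, by contrast, has a genuine gap in full generality. If a DI scheme's verification at a single branch involves a Bell test between two or more non-communicating measurement devices acting on parts of the banknote (exactly the kind of test DI protocols are built on), then a banknote that is a classical string $\ket{r}\bra{r}$ measured by internally classical, mutually non-signaling devices yields local statistics and cannot reproduce the honest (Bell-violating) acceptance pattern; your claim that ``no statistics collected during verification can separate this attack from an honest execution'' fails there. (The attack can be patched by letting the adversarial devices carry their own pre-shared entanglement, but at that point you have essentially rediscovered the $H_1\otimes H_1$ route.) A second, smaller error: you route the source--verifier agreement ``through the shared randomness between Bank branches,'' but that randomness is the Bank's \emph{private} resource (Assumption ASM1) and the adversary cannot write into it; it also does not need to, since it manufactures all the devices and can pre-correlate them at build time. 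With those two points repaired --- replace the classical simulation by two copies of the honest implementation, and drop the appeal to the Bank's shared randomness --- your argument coincides with the paper's.
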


Intuitively it is easy to see that using an appropriate modification of the first attack from the previous section any money scheme can be broken. Indeed, without communication between Bank's branches, a malicious mint can always prepare two copies of the banknote in such a way that both will pass verification in different branches. The only way a branch can verify the banknote is to check correlations with client's banknote. It is impossible to ensure that the verification will influence, or give knowledge about, correlations of another branch with different malicious copy of the banknote. 
To justify these intuitions we will provide more formal proof of the above theorem in Appendix \ref{app:nogo}.

Because of the impossibility result we can ask how close one can go toward the device independent approach. We partially answer this question by providing in the next section our main contribution, the semi-device independent quantum money scheme. It requires weaker assumptions than any previous one. We prove its security against a wide class of important attacks and conjecture that it is also secure in the general case.

\subsection{Semi-device independent quantum money protocol}

\label{sec:protocol}
Motivated by the fact that \textit{joined attacks} can compromise the security
of some private quantum money schemes we will show a partial solution to this problem. In this Section we present a scheme for semi-device independent private key quantum money. The concept of a semi-device independent quantum key distribution was discovered by Paw{\l}owski and Brunner \cite{sdi}. In that scheme the sender does not have to trust neither the source nor the measuring device. Instead, the nontrivial assumptions are that the states sent to the receiver have limited dimension and are disentangled from adversary. See Figure \ref{fig:sdiOrigin}.

\begin{figure}
	\includegraphics[width=1\linewidth, angle=0, trim={1cm 9.8cm 8cm 2cm}, clip]{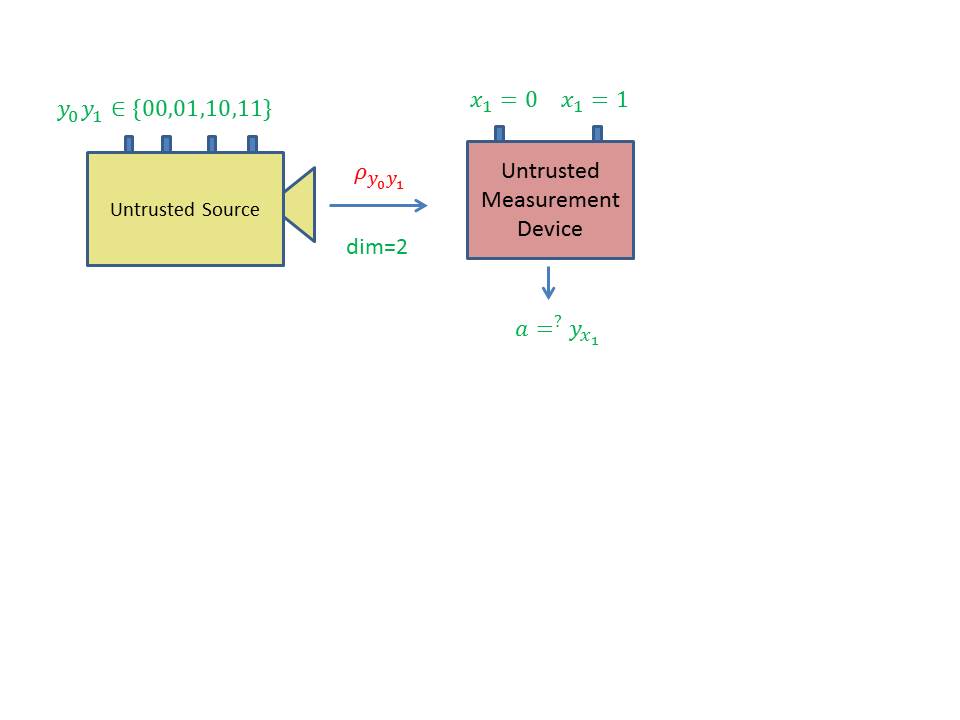}
	\caption{Scheme of the semi-device independent quantum key distribution \cite{sdi} (SDI QKD).}	
	\label{fig:sdiOrigin}
\end{figure}

Our scheme of money will be based on the SDI QKD scheme with the assumption
that the dimension of each state send from sender to the receiver is 
a qubit  ($d=2$). In order to introduce both the concept and a notation it is instructive to recap briefly the semi-device independent quantum key generation protocol \cite{sdi}. The key is produced as follows.
The sender sets up $n$ pairs of random bits $(y_0^i,y_1^i)_{i=1}^n$. In each run of the experiment $i \in [n] \coloneqq \{ 1, \ldots, n \} $, upon pressing the correct
button sender's device produces an untrusted state $\rho_{y^i_0,y^i_1}$,
which is \textit{assumed to be a qubit}, and sends it to the receiver.
Receiver's device is fully untrusted. It measures the state in an arbitrary manner (perhaps knowing state's preparation), yet upon a (random) input $x_i$ it has to output a bit $a_i$ which equals $y^i_{x^i}$. In the classical case, the success probability of guessing the bits of sender is only $3/4$, while in quantum case it is $P_Q \coloneqq \cos^2(\pi/4) \approx 0.8536$. If the guessing probability is larger than a certain value,
the secure key can be established. 

In the SDI quantum money scheme the branches of the Bank play the roles of senders, while the client Alice is the receiver. 
	
	\textbullet \ \textbf{Creation of a single banknote.}
	To create the money all \(k\) branches of the Bank have to posses a common secret randomness that is later stored in classical memories of the branches. Each portion of the bits $(y_0^i,y_1^i)_{i=1}^n$ of this key is attached to some serial number of a separate banknote $SN$ in advance. (Note that the secret key can be obtained for example by measurement on the shared \(2n\) GHZ states \cite{GHZ} or by encrypted classical communication).
	To generate a quantum state of the banknote  associated to the number $SN$  one branch \(B_S\) (in practice the closest to Alice) uses $(y_0^i,y_1^i)_{i=1}^n$ associated with this $SN$ as a sequence of inputs to its untrusted device $S$ (source). The latter device in turn generates $n$ qubits ($\rho_{y_0^i,y_1^i}$) that together form the quantum state of the banknote:
	\begin{equation}
		\bigotimes_{i=1}^n \rho_{y_0^i,y_1^i}.
	\end{equation} 
	The above state is sent to Alice's wallet (dedicated quantum memory device). In the end the joined state of $k$ branches of the Bank and Alice's wallet 
	takes the form:
	\begin{equation}
		\sum_{y=(y_0^i,y_1^i)\in\{0,1\}^{2n}} |y\>\<y|_{B_1}\otimes...\otimes|y\>\<y|_{B_k}\otimes \left(\bigotimes_{i=1}^n \rho_{y_0^i,y_1^i}\right).
	\end{equation}
	
	\textbullet \ \textbf{Verification at the Bank.}  
	 	 Alice comes to any branch $B_l$. The \(B_l\) generates
	 	 a bit-string $(x^i)^n_{i=1}$, inputs the bits to the
	 	 untrusted terminal $T$, and collects the output bit-string $(a^i)_{i=1}^n$.
	 	 For a total data represented by a string of tuples:
	 	 $S_A = (y_0^i,y_1^i,x^i,a^i)_{i=1}^n$
	 	 the Bank accepts it if the following condition is satisfied:
	 	 \begin{equation}
		 	 X_A (S_A) \coloneqq \left|\left\{ i  \in [n] : a^i = y^i_{x^i}\right\}\right| \geq \theta n
		 	 \label{eq:acceptance}
	 	 \end{equation}
	 	 i.e., the number of correct guesses is above the \textit{threshold value} 
	 	 $\theta n$, and rejects otherwise.
	
	\textbullet \  \textbf{Verification at a distance.}  
	 Alice establishes an \textit{authenticated connection for classical communication} with some (arbitrary) branch \(B_l\) of the Bank. The \(B_l\) gives her random inputs $(x^i)_{i=1}^n$, that she should use together with her quantum state from the memory of her wallet as inputs to the untrusted terminal (her own, or, e.g., the one operated by a seller in a shop). The classical output $(a^i)_{i=1}^n$ from the device (possibly modified by Alice to $(a'^i)_{i=1}^n$) is then sent to \(B_l\) that checks if the data $(y_0^i,y_1^i,x^i,a'^i)_{i=1}^n$ are acceptable if inequality (\ref{eq:acceptance}) holds, and rejects it otherwise.
  
For the sake of clarity the whole process of the creation and the verification of the semi-device independent quantum money is illustrated in Figure \ref{fig:sdi} (for the general scheme with many branches) and Figure \ref{fig:sdi2} (for the creation and verification at the same branch). We state below certain remarks on the variants of the above approach.

\begin{figure}
%
%
%
%
%
%
%
%
%
%
%
		\includegraphics[width=1\linewidth, angle=0, trim={2cm 5cm 3cm 5cm}, clip]{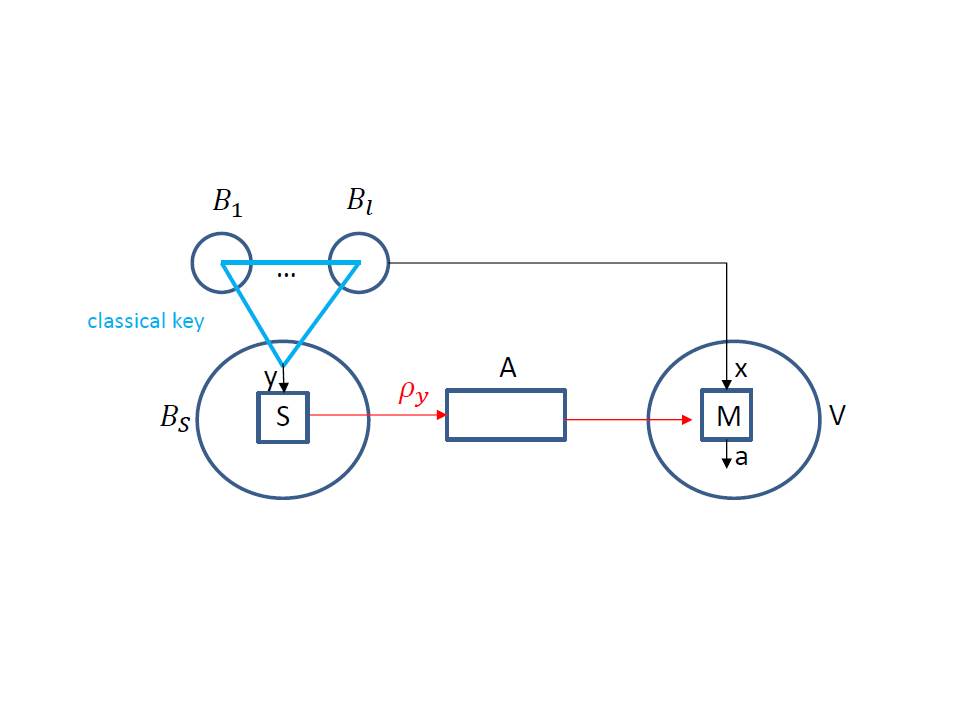}
	\caption{%
		\( B_1 \ldots B_k \) represents arbitrary number of Bank's branches that share common randomness \( y = (y_0^i, y_1^i)_{i=1}^n \) stored in the classical memories. Branch \( B_S \) inputs the string \( y \) into untrusted source device \( S \) and sends the generated \(n\) quantum systems \( \rho_{y} = \bigotimes_{i=1}^n \rho_{y_0^iy_1^i} \) to Alice's memory. When Alice wants to verify the money she visits some branch \( B_l \). This branch generates random binary string \( x = ( x^i )_{i=1}^n \) of length \( n \) and feeds as an input to untrusted measurement devices of the terminal \( M = (M^i)_{i=1}^n \), which generates a string \( a = (a^i)_{i=1}^n \) . The branch then estimates the probability distribution \( P(a|xy) \) and accepts the banknote as valid or rejects it dependently on whether the condition Eq (\ref{eq:acceptance}) is met.
	}
	\label{fig:sdi}
\end{figure}

\begin{figure}
	\includegraphics[width=1\linewidth, angle=0, trim={1cm 0cm 0cm 0cm}, clip]{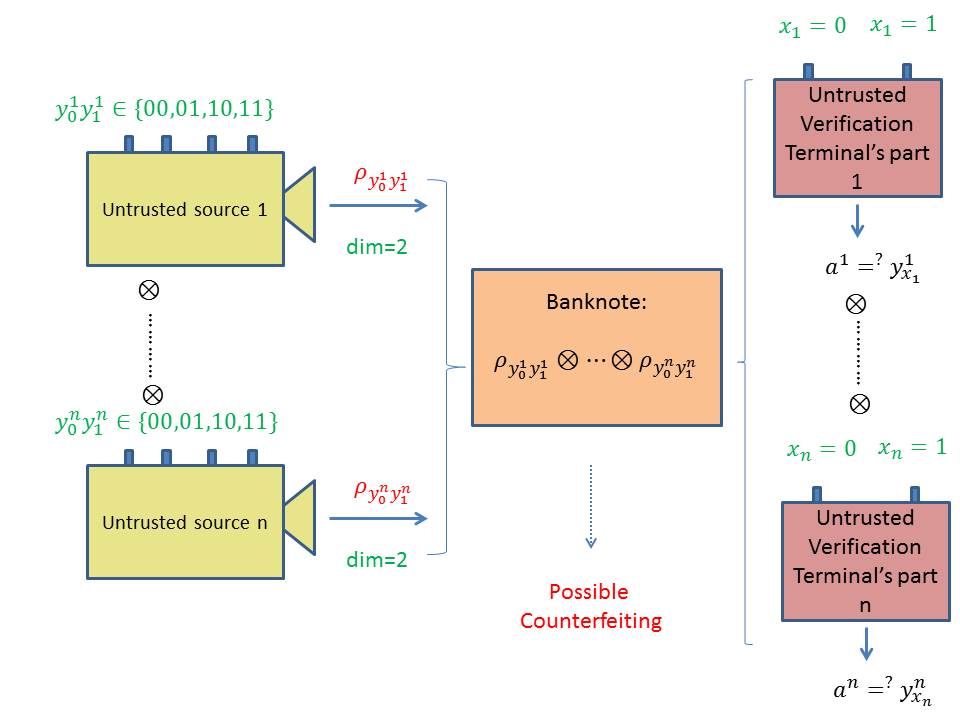}
	\caption{%
		The procedure of generation and verification of a single banknote in the SDI money scheme. The $n$ untrusted source devices independently produce qubit states $\rho_{y_0^iy_1^i}$ that form in total the banknote that is kept in Alice's wallet and is exposed to the counterfeiting by her or Frederick (or even both). The verification of the banknote is done by $n$ independent (not necessarily identical) untrusted parts of a verification terminal, each checking if $a^i = y_{x_A^i}^i$ (i.e., if the its output is equal to one of the two bits of Bank's note at run $i$ chosen randomly as $x_A^i$). The banknote gets accepted if the number of correctly guessed bits exceeds 
		$\beta n$ with $\beta = 2 P_Q\left(1/2 +\eta\right)+M\left(1/2 +\eta\right) +2\eta$, where $\eta$ depends on $n$ (taking care of possible fluctuations of the number of guesses). 
	}	
	\label{fig:sdi2}
\end{figure}

\begin{remark}[The creation of the banknote without communication]
	The branches can create the money without the communication. Using synchronized clocks, they can continuously generate new random inputs. When a client arrive, the serial number of her banknote would contain time that uniquely indicates what remembered randomness verifying branch should use. Additionally branches should agree among themselves on the allowed generation time to make sure not to generate two bills from the same randomness. 
	Similar idea can be also implemented in some previous money schemes, for example Wiesner's one.
\end{remark}

\begin{remark}[Reusing randomness during a reverification]
	Frederick can adopt a strategy to try verify the same banknote many times, waiting for such a random string from Bank's branch. For example he would like to have the size of set \( D(\bm{x}_\oplus) \) from the proof (see Appendix \ref{app:lem}) as big as possible the size of which would be providing verification of the banknote. In order to prevent this, the verifying branch, after the first attempt of the verification, could store the verification randomness in the memory and use it in all next tries of the verification of that banknote.
\end{remark}

\begin{remark}[Predefined agreement on the queries of the branches]
	Instead of using an independent randomness by all branches during the verification, the branches could earlier agree on some verification string for all banknotes. It is possible to do it in similar manner like in the banknote generation procedure. Although it is not necessary in order to maintain the security, this approach could reduce complexity of the proof and decrease a number of the qubits required in the protocol.	
\end{remark}

\subsection{Comparison of the SDI QKD and SDI money scheme}

\label{sec:compare}
We make now an explicit comparison of our protocol with
that of the semi-device independent of Paw\l{}owski and Brunner \cite{sdi}. There are three main differences.

	\textbullet \ \textbf{Memory requirements:}	
	A conceptual difference is that  we defer the process of measurement and call the states prepared by the source in SDI protocol collectively the banknote. The process of the measurement is identified by us with the verification done by the terminal at some later time. It is of particular convenience that the SDI protocol
	does not rely on the no-signaling principle, so the measurement of the
	banknotes can be done any time after they were prepared.
	In other words, our protocol needs the quantum memory while SDI does not.
		
	\textbullet \ \textbf{Limited number of runs:}	
	A significant practical difference is that SDI quantum money scheme corresponds to a limited SDI QKD protocol to the creation and the verification procedures without the privacy amplification and information reconciliation part. In particular, in our protocol the number of runs of the corresponding SDI QKD experiment (i.e., the length of the banknote) is only long enough to enable estimation of the guessing probability which depends only on the possible systematic error in the experiment and the concentration property due to law of large numbers. This is in contrast with the SDI QKD protocol which involves as many runs (at least) as the number of key bits are needed to be generated. Indeed, we do not aim at creating the secret key, because - there is no phase of the public reconciliation and the privacy amplification. Preparing and verifying
	long key is equivalent to creating and verifying a huge number of banknotes.
	
	\textbullet \ \textbf{An intermediate acceptance threshold:}	
	The third difference concerns the acceptance threshold. Acceptable range of the value of the probability of guessing $P_{\text{guess}}$ of the string $(y_0^iy_1^i)_{i=1}^n$ in the SDI QKD protocol varies from the maximal $P_Q\approx 0.8536$, which implies the highest possible key rate in this scenario, to the minimal $P_{\text{crit}}^{\text{key}}\approx 0.8415$, which implies zero key rate. Let us stress here, that \textit{any} value between $P_Q$ and $P_{\text{crit}}^{\text{key}}$ is acceptable, as leading to a non-zero key rate (yet, one aims at the highest). Instead, in the corresponding SDI money scheme one needs the value of this parameter to be larger than $P^{\text{money}}_{\text{crit}} \coloneqq (P_Q + P_{\text{crit}}^{\text{key}})/2\approx 0.84755$. On the other hand, all money schemes with the acceptance threshold $\theta$ in the range $(P^{\text{money}}_{\text{crit}},P_Q]$  are protected against forgery given large enough number of qubits of the banknote $n$.

\subsection{Security proof of the SDI money scheme}

\label{sec:proof}
In this section we provide the proof of the main result: the SDI quantum money scheme is protected against the \textit{qubit-by-qubit forgery}. That is, against the case when the mint and the counterfeiter (as well as the verification terminal possibly created by the counterfeiter) cooperate in a manner
that each qubit is attacked (prepared, copied and tested) independently.
Under some additional necessary assumptions, which we list below, we show that two, cooperating clients, Alice and Frederick, cannot get the banknote accepted as valid in \textit{two} Bank's branches. As we show, the case of \textit{many} Bank's branches follows from the security in the latter case. The case of a \textit{birthday attack} of choosing best pair of branches is then taken care of by the union bound. Indeed let us assume that number of branches equals $k= \mbox{poly}(n)$, where $n$ is the length of the banknote, which is a reasonable constraint possible to be satisfied. If for any pair the probability of successful counterfeiting is exponentially small $\epsilon_{2}(n) \sim O(e^{-n})$, the highest probability of this event for $k$ branches is not higher than $\epsilon_k \coloneqq \binom{k}{2} \epsilon_2(n)$, which is still small (of order $O(e^{-n})$). 

Before we will present the proof, we will first state explicitly all more or less implicit assumptions of our scheme. Note, that first two are necessary in general (not only in the SDI money scheme).
\begin{enumerate}[label=\textbf{ASM\arabic*}] \label{ite:assumtions}
	\item \label{asm:1} Bank's branches have access to a private fully random number generator that they use to generate \(y\)'s and \(x\)'s.
	\item \label{asm:2} Branches of the Bank use an honest classical post-processing units in the verification procedure.
	\item \label{asm:3} The dimension of the state that is produced at the output of the source (i.e., mint) is bounded and there is no other information leaking from the source to Alice or Frederick.
	\item \label{asm:4} The state produced by the source is unentangled from the dishonest parties (Alice and Frederick).
	\item \label{asm:5} The source devices create the states in independent way, what also implies that each of the sources have access only to the its input (not the inputs of the other sources).
	\item \label{asm:6} The measurement devices are independent, each measure only its subsystem, and the outputs of Alice and Frederick in each run are independent from the inputs and the outputs from another runs.
\end{enumerate}

In particular case of the presented SDI money scheme, in ASM3 we specify that
each of the independent parts of the sources (as specified in ASM5) has
output bounded by $d=2$, i.e., the source works by producing independently $n$ qubits
(however not necessarily in the same way).

\begin{remark}[On the possible weakening of  the assumptions]
	It seems plausible that the assumption ASM6 could be omitted but it would complicate  the proof. The question if we can omit the assumption ASM5 is a hard open problem, related to the formulation of the SDI QKD scheme and Random Access Codes \cite{RAC1,RAC2} in general. On the other hand, all other assumptions are necessary to prove the security of our scheme since rejecting any of them leads to a successful attack.
\end{remark}

Let us briefly describe the idea of the proof of the security of the scheme. It is a consequence of two facts: (i) the monogamy inherent to the SDI key generation protocol and (ii) the fact that each Bank's branch queries independently from the other branches during the verification procedure. It will hold for the case when the Bank
verifies the banknote via untrusted terminal, i.e., Alice (and / or Frederick) come
to the Bank to get the banknote accepted. The case with the communication is then reduced to the latter, under assumption that the strategy to lie about the outputs
of the devices (which is then at a choice of the dishonest parties) is individual, independent for each of the runs of the protocol (see Appendix \ref{app:def}). As we discuss in Section \ref{sec:discusion}, this a bit unrealistic assumption that can be in principle dropped given the protocol of SDI QKD is proven to be secure against the general, so called \textit{forward signaling attacks}.

\subsubsection{The case of attack on a single qubit}

By definition, much like in the Wiesner scheme, for a banknote to be accepted, its owner has to guess correctly the bits of the Bank. To see that two dishonest persons, Alice and Frederick, can not both pass the verification of our banknote it is instructive to focus on the attack on a single qubit of the banknote. Suppose Alice and Frederick are trying to ``split'' its use to maximize the probability of guessing $P_{guess}$ in two experiments of some two branches of the Bank. Their joined attack can be described as a conditional probability distribution (a box): $P(a_F,a_A|x_A,x_F,y_0, y_1)$, where $y_0$ and $y_1$ are the secret keys of the Bank which Alice and Frederick are trying to guess, $x_A$ and $x_B$ are the random inputs to the box generated by the Bank. 

For simplicity of description we will assume that
Alice and Frederick comes to the Bank, while the Bank who sets the input to the 
devices (we will argue later how to partially relax this assumption).
The joined attack aims at generating two bits $a_A$ (by Alice)
and $a_F$ (by Frederick), such that the probability of guessing 
the $x_A$th bit of $y_B = (y_0,y_1)$ and $x_F$th bit of $y_B$ 
by Frederick are both maximal. The guessing probability for Alice
and Frederick respectively read:
\begin{equation}
	P_{guess}^A= \frac{1}{8} \sum_{y_0,y_1,x_{A}} P(a_A = y_{x_A}|y_0,y_1,x_A),
\end{equation}
and
\begin{equation}
	P_{guess}^F= \frac{1}{8} \sum_{y_0,y_1,x_{F}} P(a_F = y_{x_F}|y_0,y_1,x_F).
\end{equation}
Let us observe first that in the case $x_A=x_F$, they can both
achieve the maximal possible probability of guessing $P_Q= cos^2(\pi/8) \approx 0.8536$ \cite{sdi}. Indeed, Alice can come first to
one branch, and behave honestly having the guessing probability $P^A_{guess}=P_Q$, while Frederick can \textit{copy her answer},
reaching the same probability of guessing. However, when
$x_A \neq x_F = x_A\oplus 1$, the dishonest parties need to
guess opposite bits: $y_0$ (Alice) and $y_1$ (Frederick) or vice versa
(with half probability). However, it is proven in \cite{sdi} that
\begin{equation}
	P_{guess}^{AF}(y_0) + P_{guess}^{AF}(y_1) \leq \frac{5+\sqrt{3}}{4}\eqqcolon M.
\end{equation}
Hence, even if Alice and Frederick were fully collaborating, the sum of the probabilities of guessing of the two bits is bounded.

Since $x_A = x_F$ with the probability one half, averaging over the value of $x_A\oplus x_F$
we conclude that the average number of correctly guessed bits has an upper bound
\begin{equation}
	\frac{1}{2}(2 P_Q + M) \coloneqq B.
	\label{eq:B}
\end{equation}

In what follows we will prove that due to the independent nature of the attack,
the above bound, multiplied by the number of runs $n$, applies (up to fluctuations $\eta$ around the average). The corresponding bound enlarged by the maximal possible fluctuations reads then $n \beta$ with $\beta = 2 P_Q\left(1/2 +\eta\right)+M\left(1/2 +\eta\right) +2\eta$. We will
then choose the threshold value $\theta$ to be larger than ${\beta}/2$. This
will assure, that the two dishonest parties can not get the same banknote accepted in two Bank's branches, as their total sum of the guesses would be larger than $2\beta/2  = \beta$, reaching the desired contradiction.

\subsubsection{Extending the argument to the general case of the qubit-by-qubit attack}

We would like to extend this reasoning to the case of
the repeated experiment of $n$ runs ($n$ will be relatively small,
as short as the length of an usual preamble of the QKD protocols). We assume
here that the attack is ``id'', i.e., by not necessarily equal
however independently distributed random variables, according to the measure:
\begin{equation}
	\label{eq:main-dist1}
	\mu \sim	\bigotimes_{i=1}^n U \left(y^i_0, y^i_1, x^i_A, x^i_F \right) P\left(a^i_A,a^i_F | y^i_0, y^i_1, x^i_A, x^i_F \right),
\end{equation}
where $U(y^i_0,y^i_1,x^i_A,x^i_F)$ denotes the uniform distribution over its arguments. We then observe that, instead of providing
$x_A$ to Alice and $x_F$ to Frederick, the two branches of the Bank
could give $x_A$ to Alice and $x_\oplus \coloneqq x_A\oplus x_F$ to Frederick. This is because Alice and Frederick are collaborating, so they can 
compute back value of $x_F$ from these data in case it is needed.
We can therefore change the scenario to one in which the
parties are given $(x_A,x_\oplus)$, if the probability measure
is changed accordingly to the following one:
\begin{equation}
	\begin{split}
	\mu' \sim& \bigotimes_{i=1}^n U \left(y^i_0, y^i_1, x^i_A, x^i_\oplus \oplus x^i_A \right)\\ &\times P\left(a^i_A,a^i_F | y^i_0, y^i_1, x^i_A, x^i_\oplus \oplus x^i_A \right).
	\end{split}
\end{equation}
The measure \( \mu' \) acts on \( x_A \) and \( x_\oplus \) in the same way as \( \mu \) would acts on \( x_A \) and \( x_F \), so in some sense it is undoing the XOR operation.
This modification of scenario does not change \textit{the probability of
	successful forgery}, i.e., the probability of an event in which both Alice will get accepted as supposed to have a valid banknote and so will happen to Frederick.
To see this, we first note that a set of events (denoted as $\mathcal{F}$) leading to a successful forgery reads:
\begin{equation}
	\begin{split}
	\mathcal{F} \coloneqq \left\{(y^i_0, y^i_1, x^i_A, a^i_A, x^i_F,  a^i_F)_{i=1}^n \in \{ 0,1 \}^{6n}\right.: \\ \left|\left\{ i  \in [n]  : a^i_A = y^i_{x^i_A} \right\}\right| > \theta n,\\
	\left. \left|\left\{ i  \in [n]  : a^i_F = y^i_{x^i_F} \right\}\right| > \theta n \right\}.
	\end{split}
\end{equation}
We will also define a strategy \(S\) by 
\begin{equation}
	S \coloneqq (y^i_0, y^i_1, x^i_A, a^i_A, x^i_F, a^i_F)_{i=1}^n.
\end{equation}
We then prove (see Corollary \ref{cor:accacc'}) that
\begin{equation}
	\begin{split}
	P(\mathcal{F}) \coloneqq& \sum_{S\in \mathcal{F}} P_{\sim \mu}(\mathcal{F})\\
	=& \sum_{S'\in \mathcal{F}'} P_{\sim \mu'} P(S')\\ 
	\eqqcolon& P(\mathcal{F}'),
	\label{eq:equivalence}
	\end{split}
\end{equation}
where $S' = (y^i_0, y^i_1, x^i_A, a^i_A, x^i_\oplus, a^i_F)_{i=1}^n$
and 
\begin{equation}
	\begin{split}
	\mathcal{F}' \coloneqq \left\{(y^i_0, y^i_1, x^i_A, a^i_A, x^i_\oplus, a^i_F)_{i=1}^n \in \{ 0,1 \}^{6n} : \right.\\ \left|\left\{ i  \in [n]  : a^i_A = y^i_{x^i_A} \right\}\right| > \theta n,\\
	\left. \left|\left\{ i  \in [n]  : a^i_F = y^i_{x^i_\oplus \oplus x_i^A} \right\}\right| > \theta n \right\}.
	\end{split}
\end{equation}

Due to the fact that $\textbf{x}_\oplus$ is created from fully random bits that are unknown for adversary during creation of the money, we have
\begin{equation}
	P(\mathcal{F}') = \sum_{\bm{x}_\oplus \in \{0,1\}^n} p(\bm{x}_\oplus) P(\mathcal{F}'|\bm{x}_\oplus).
\end{equation}
We can narrow considerations to the \textit{typical} $\bm{x}_\oplus$, i.e., those having number of symbol $0$ and $1$ approximately $n/2$ times. More formally the set of typical sequences is defined as
\begin{equation}
	\mathrm{\mathcal{T}}_{\eta} \coloneqq \left\{ \bm{x} : \left| \frac{|\bm{x}|_0}{n}  - \frac{1}{2} \right| \leq \eta \right\}, 
\end{equation}
where $|\bm{x}|_0$ is the number of positions with symbol $0$ in a bitstring
$\bm{x}$. All sequences of the length $n$ (given $n$ is large enough) are with high probability typical (i.e., with a probability $1-\epsilon(\eta)$ for $\epsilon(\eta)=2 \exp(-2\eta^2 n)$). We have therefore

\begin{equation}
	P(\mathcal{F}') \leq \sum_{\bm{x}_\oplus \in \mathcal{T}(\eta)} p(\bm{x}_\oplus) P(\mathcal{F}'|\bm{x}_\oplus) + \epsilon(\eta).
	\label{eq:split-to-typ}
\end{equation}

We then see that one can fix a typical $\bm{x}_\oplus$, and 
prove that for any such \( \bm{x}_\oplus \) the probability of acceptance is low.
We will assure it by setting an appropriate $\theta$, so that
with a high probability over the conditional measure $\mu''\coloneqq \mu'(\bm{x}_\oplus)/ p(\bm{x}_\oplus)$ the
strings $S'$ emerging from the attack will be rejected as
having too low number of guessed bits of $(y_0^i,y_1^i)_{i=1}^n$.

In more detail, we first note that for a fixed $\bm{x}_\oplus$, on average over $n$ runs with respect to the measure $\mu''$, there are no more guessed inputs than $n B$ with $B$ given in Eq. (\ref{eq:B}). It remains to take into account the fact, that the observed number of the guessed inputs need not to be  
equal to its average. However $\bm{x}_\oplus$ is typical, hence the number of runs will be at least $n/2 - \eta n$, so we can use that the attack is performed in the independent manner. 
Due to Hoeffding's inequality, we obtain that the observed number of guessed inputs is with a high
probability bounded from above by:
\begin{equation}
	\begin{split}
	n  \beta \equiv& n \left[ (2 P_Q\left(\frac{1}{2} + \eta\right) + M \left(\frac{1}{2}+\eta\right) +2\eta \right] \\  =& n \left[\left( 2 \cos\left( \frac{\pi}{8} \right) + \frac{5 + \sqrt{3}}{4} \right) \left(\frac{1}{2} + \eta\right) +2\eta \right]
	\end{split}
\end{equation} 
where $\eta$ takes care of the maximal possible fluctuations.

Before we explicitly control these fluctuations, we first define four random variables describing the guessing at the $i$th run of the verification procedure by Alice and Frederick as
\begin{equation}
	X_{i,A}^{\bm{x}_\oplus} \coloneqq 
	\begin{cases}
		\delta\left(a^i_A,y^i_{x_A^i}\right) &: x_A^i = x_F^i \\
		0 &: x_A^i \neq x_F^i
	\end{cases},
\end{equation}

\begin{equation}
	X_{i,F}^{\bm{x}_\oplus} \coloneqq 
	\begin{cases}
		\delta\left(a^i_F,y^i_{x_A^i}\right) &:  x_A^i = x_F^i \\
		0 &: x_A^i \neq x_F^i
	\end{cases},
\end{equation}

\begin{equation}
	Y_{i,A}^{\bm{x}_\oplus} \coloneqq 
	\begin{cases}
		\delta\left(a^i_A,y^i_{x_A^i}\right) &: x_A^i \neq x_F^i \\
		0 &: x_A^i = x_F^i
	\end{cases},
\end{equation}

\begin{equation}
	Y_{i,F}^{\bm{x}_\oplus} \coloneqq 
	\begin{cases}
		\delta\left(a^i_F,y^i_{x_A^i \oplus 1}\right) &: x_A^i \neq x_F^i \\
		0 &:  x_A^i = x_F^i
	\end{cases}.
\end{equation}

Now we can get back to describing deviations from the average of the $4$ random variables:
\({\bar X}^A_{\bm{x}_\oplus}, {\bar X}^F_{\bm{x}_\oplus}, {\bar Y}^A_{\bm{x}_\oplus}\) and \( {\bar Y}^F_{\bm{x}_\oplus} \) that are the sums over \( i \in [n] \) for the variables described above. 
The values of \( {\bar X}^A_{\bm{x}_\oplus} ({\bar X}^F_{\bm{x}_\oplus}) \) are the numbers of bits of $(y_0^i,y_1^i)$ correctly guessed by Alice (Frederick) from the positions $i$ satisfying $ x_A^i \neq x_F^i $. Analogously, \( {\bar Y}^A_{\bm{x}_\oplus} ({\bar Y}^F_{\bm{x}_\oplus}) \) describe the number of correct guesses for \( i \) such that $ x_A^i = x_F^i $.
Details are given in Lemma \ref{lem:u-bound} and Corollary \ref{cor:deviation} (see also Appendix \ref{app:lem} for an explicit definition of random variables, their sum and expected values).

The last argument follows from a simple observation. Namely,
if the total fraction of the correctly guessed positions by two persons 
is less than $\beta$, the minimum of the fractions of the correct guesses
by each of them separately is not greater than $\beta/2$. Setting
the acceptance threshold $\theta$ large enough that
the \textit{minimum of the numbers of guesses is below $\theta$}, we assure that for each typical $\textbf{x}_\oplus$ the banknote is rejected with the high probability in at least one branch. In particular, for any $\theta > \beta/2$ this probability is at least $1 - 8\exp(-\eta^2 (n/2 -\eta))$, where for every typical $\textbf{x}_\oplus $ the error $8\exp(-\eta^2 (n/2 -\eta))$ upper bounds the probability of event that at least one of the $4$ random variables \({\bar X}^A_{\bm{x}_\oplus}, {\bar X}^F_{\bm{x}_\oplus}, {\bar Y}^A_{\bm{x}_\oplus}\), \( {\bar Y}^F_{\bm{x}_\oplus} \) is far from its respective average. 

Taking into account Eqs (\ref{eq:split-to-typ}) and (\ref{eq:equivalence}), we obtain finally
\begin{equation}
	P(\mathcal{F}) \leq 5 \epsilon'(\eta),
\end{equation}
with $\epsilon'(\eta) \coloneqq 2\exp(-\eta^2 (n/2 -\eta))$.

\section{Quantum Oresme-Copernicus-Gresham's Law}\label{sec:gresham}

One of the famous laws of economy is:
\begin{itemize}
	\item \textbf{Bad money drives out good.}
\end{itemize}
This law states, colloquially speaking, that if certain money is cheaper to produce,
then it will eventually subside the one that is more expensive to produce, where
expensive is understood not in terms of face value but in terms of intrinsic value.
Although it was named after Sir Gresham, it has been observed by others, even much earlier. The two most cited authors are Nicole Oresme \cite{GreshamOresme} and Nicolaus Copernicus \cite{GreshamCopernicus}, so that the above law is also refereed to as Copernicus', or the Oresme-Copernicus-Gresham's Law. However, the first known appearance of a similar statement is in the comedy ``The Frogs'', written by the Ancient Greek playwright Aristophanes around 405 BC \cite{GreshamArystofanes}. For a overview of the law see e.g.\ \cite{CarolinaSparavigna2014}.

From the perspective of the economy the concept of money is a matter of a social agreement and properties of a given material/procedure used to produce a coin or a banknote. So it might appear that there is no need to consider a quantum variant of Gresham's law per se, because one can apply the OCG Law to the new method of mining -- from quantum states. This is what happens to classical crypto-currencies. Instead, formulating quantum analog of the OCG Law we would like to compare quantum currencies with each other within quantum domain. This is because one could choose a ``cheaper'' way to produce money -- from quantum states that are ``cheaper'' to obtain whatever means ``cheaper'' to quantum technology  at a given moment of development. We therefore would like to introduce and discuss a version of Quantum Gresham's law in the following way:
\begin{itemize}
	\item \textbf{Bad \textit{quantum} money drives out good \textit{quantum} one.}
\end{itemize}

Deciding whether to keep a given quantum currency or not may be a complex process,
depending on various mutually dependent parameters, the importance
of which varies over a change of preferences of particular individuals or societies. It is therefore too early and hence too hard to foresee the behavior of the quantum currency flow between the schemes provided they happen to be realized experimentally. 

\begin{example} 
	For the presented SDI quantum money scheme, to be unforgeable via qubit-by-qubit way, it is enough that the source of the banknote (if it able to produce a large number of banknotes) manages to produce SDI key at a rate $\theta > \beta/2$. It is however not demanded that $\theta \approx P_Q$, i.e., that the source would be able to produce money equivalent to low number of runs of the SDI key generation experiment with the maximal possible rate $P_Q$. 
	
	Suppose that some provider $P_I$ is able to	produce the SDI money which passes the acceptance threshold $\theta =	\beta/2 + 2 \delta$ for some $\delta >0$. Next, suppose that some other provider
	$P_{II}$ is be able to produce a reliable SDI money with 
	the lower acceptance threshold $\beta/2 +\delta$. As we have proved in Theorem \ref{thm:main}, banknotes of both providers are valid and can not be forged under certain assumptions. However the banknote of the provider $P_I$ can be attributed a larger \textit{quantum commodity value} defined as the \textit{SDI key rate of a source which produced the banknote}. From perspective of banknote's holder, this key rate implies nonzero rate of the min-entropy of banknote's quantum state and hence nonzero rate of the \textit{private randomness}. An additional reason to keep the $P_I$ type money and spend more often $P_{II}$ type is that the first one could be more robust to noise. Indeed, even decrease by $\delta$ of the observed fraction of the correct guesses will not invalidate the banknote. 
\end{example}

We make then the following observation:
\begin{observation}
	If the Quantum Oresme-Copernicus-Gresham law applies, the SDI money with lower acceptance threshold $\theta$ would drive out the SDI money with higher $\theta$. (We note here, that we have implicitly assumed that
	the hardware parameters of the realization by $P_I$ and $P_{II}$
	are comparable. Otherwise, realization of the banknote according to type $P_{II}$' receipt can be simply too expensive, e.g. in energy spent on keeping them in a quantum wallet).
\end{observation}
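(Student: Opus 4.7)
The plan is to verify the two ingredients needed for the classical OCG law to have a bona fide quantum analogue in our setting: (i) a notion of intrinsic value for a banknote that is monotone in the acceptance threshold $\theta$, and (ii) the interchangeability of the two banknote types at a common face value. The first ingredient I would identify, following the preceding example, with the SDI key rate of the source that produced the banknote: a banknote minted at threshold $\theta$ certifies a guessing probability at least $\theta$, which in turn (by the SDI QKD analysis of Paw\l{}owski and Brunner) certifies a nonzero min-entropy rate and therefore a nonzero rate of \emph{private randomness} extractable by the holder. This is a quantum commodity that is strictly monotone in $\theta$, and is available to the holder independently of whether she actually spends the banknote. The second ingredient is automatic under the stated assumption of comparable hardware: banknotes of $P_I$ and $P_{II}$ are both accepted by any honest branch of the Bank, hence are legal tender with the same face value.

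Given these two ingredients, the argument then mirrors the classical OCG mechanism. First I would observe that a rational holder who possesses both a $P_I$ banknote (threshold $\beta/2+2\delta$) and a $P_{II}$ banknote (threshold $\beta/2+\delta$) has a strict incentive to spend the $P_{II}$ one and retain the $P_I$ one: the face value is preserved under either choice, while the intrinsic value (the certified private randomness, and additionally the noise margin $\delta$ that protects the banknote against decoherence during storage before re-verification) is strictly larger for $P_I$. Aggregating this individual preference over the population of holders yields exactly the OCG conclusion that $P_{II}$-type banknotes will dominate active circulation while $P_I$-type ones are hoarded.

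The main obstacle is conceptual rather than technical: unlike the classical law, where the ``intrinsic'' content (gold, silver) has an obvious secondary market, here one must argue that a banknote-holder genuinely values the certified randomness or the noise margin, not merely whether the Bank accepts the banknote. I would address this through the parenthetical assumption of comparable hardware cost: if keeping a quantum banknote alive in a wallet is itself expensive (energy spent on error correction, on maintaining the quantum memory, on authenticated channels to the Bank), then a holder planning to store her money for a long time strictly prefers a larger noise margin, giving a concrete operational reason to hoard the high-$\theta$ banknote. This closes the analogy and reduces the observation to the classical OCG law applied to the commodity ``certified quantum randomness plus noise robustness.''
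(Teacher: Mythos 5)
Your proposal is correct and follows essentially the same route as the paper: the paper justifies this observation via the preceding Example, identifying the intrinsic (``quantum commodity'') value of a banknote with the SDI key rate of its source --- hence a nonzero rate of certified min-entropy and private randomness, monotone in $\theta$ --- together with the noise margin $\delta$ as a second reason to hoard the high-$\theta$ banknote, and then invokes the classical OCG mechanism under the assumption of equal face value and comparable hardware. Your additional framing (explicitly isolating the two ingredients and the aggregation-of-preferences step) is a slightly more structured presentation of the same informal argument, not a different proof.
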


The above example is very limited, as it concerns different
ways of realization of \textit{the same} money scheme (i.e., currency). It is however plausible that if the quantum version of the OCG law turns out to be true, the individuals will tend to keep the most secure, cheapest to produce and to store 
money of the highest commodity value (in the sense of its use for quantum information processing), and will spend the other currencies more often. Going a bit further, one can consider monies in theory $T$ (for such a general approach see \cite{qptmoney}), and have a ``$T$ Oresme-Copernicus-Gresham Law'', a theory-dependent version describing flow of currencies valid in a theory $T$. An interesting
special case would be the ``multi-theory OCG Law'' that could govern the flow of currencies between different sub-theories. A natural example of the latter would be \textit{Classical-Quantum Oresme-Copernicus-Gresham Law}, expressing the behavior of everyday currencies and the quantum ones on the same footing.

\subsection{A comparison of the money schemes}

In the Table \ref{tab:comparison} we present the comparison of different protocols, including original protocol of Stephen Wiesner \cite{Wiesner1983} and that of Dimitry Gavinsky \cite{Gavinsky2012}, and show which (parameters of) quantum devices have to be trusted by the Bank in order to maintain security. 
\begin{table*}
	\begin{tabular}{|l||c|c|c|c|c|}
		\hline 
		\textbf{Protocol} & \textbf{Classical scheme} & \textbf{Wiesner's scheme}\cite{Wiesner1983} & \textbf{Gavinsky's scheme}\cite{Gavinsky2012}  & \textbf{SDI[this work]} & \textbf{DI[this work]} \\ 
		\hline 
		Source & Yes & Yes  & Yes & No & No  \\ 
		\hline 
		Alice's measurement  & N/A & Yes\textsuperscript{1} & No & No & No \\ 
		\hline 
		System dimension & No & No & No  & Yes & No  \\ 
		\hline 
		Number of branches & 1 &  unlimited & unlimited & unlimited & unlimited \\
		\hline
	\end{tabular}
	\caption{%
		Table of the Bank's trust. \textsuperscript{1} The measurement of Alice's subsystem can be performed in Bank's branch or in the partially trusted terminal (for example, in a shop). 
	}
	\label{tab:comparison}
\end{table*}

\subsection{How close we are to practice?}\label{sec:feas}

A fundamental obstacle in realization of the presented one and many other money schemes is the fact that it relies on existence of a reliable and long-time living
quantum memory. 
It is hard to foresee when (if ever) such memories would be available, however 
there are works towards this direction. As an example of a recent huge experimental progress in developing the quantum memory we can invoke paper by Wang et al.\ \cite{Wang2017}, presenting a single-qubit quantum memory that exceeds coherence time of ten minutes. Furthermore Harper and Flammia \cite{IBMErrorCor} demonstrated the first implementation of the error correcting codes on a real quantum computer. This may indicate, that the error correcting codes can become useful in the near future quantum memories. 

We want to emphasize here that the tasks of universal fault-tolerant quantum computing \cite{NISQ-Preskill} and of a reliable quantum repeater \cite{repeaters} (for latest discovery see \cite{2018Nature-last-repeaters} and references therein) are both different from that of a fault-tolerant \textit{quantum storage} (QS). The memory of the quantum computer need not be stable for a long time, because it is needed only for the time when the gates
of quantum algorithm are done, while the QS needs to be stable for a long period of time. However, operations on the QS are far from being universal \cite{RB2001}, reduced to measurements in two bases (at least in the considered SDI money scheme). In that respect the QS appears to have much easier functionality. The easiness of operations of the QS is more comparable with
that of the single quantum repeater station (at least for the 1st generation quantum repeater \cite{Rep-generations}).
However the $1$st generation single repeater's node needs to achieve the operation
of entanglement swapping of two photons incoming from different origins, which is a totally different task. In the case of the QS states are prepared and need not be send, i.e., QS can be done without the use of photons. This should simplify this task in comparison to the task of achieving quantum Internet. (Note also that a station of the $3$rd generation quantum repeater is close in performance to the small-size universal quantum computer \cite{Rep-generations}).

However, although there is no physical law that bounds from above the time of coherence of a qubit state, achieving a reliable QS appears to be an extremely hard task, because of quantum decoherence, that is usually happening very fast. This is the reason why the very first idea (of QS needed for money schemes) appearing in the theory may become the last one (after quantum computer and quantum Internet) to be realized in practice. 
This may also happen due to the fact, that, in contrast to quantum computing or quantum secure communication, money scheme requires to be widespread implemented in order to be useful.

It is also worth to notice that, recently, the first experimental implementation of quantum money schemes were performed \cite{Bartkiewicz2017,Bozzio2018,Guan2018}. It indicates that real life implementation of quantum money could potentially be achievable using the near future technologies.

\section{Discussion}\label{sec:discusion}
	
In this article we have presented an alternative method of testing
of the original Wiesner banknotes - a Semi-Device Independent quantum money scheme. To our knowledge this is the first attempt to provide the private money scheme unforgeability of which would not fully relay on trusting the mint (source of the banknote) \textit{and} the inner workings of the verifying terminal at the same time.  

Furthermore we provide impossibility result for a fully device independent quantum money. It shows that our Semi-Device Independent approach is a good candidate for the strongest possible money scheme. To clarify, by the strongest we mean here that we maximally reduce trust to inner working of all quantum devices that are used in all stages of money production and verification.

We have proven that the scheme cannot be broken by a forgery
who copies the banknote in a qubit-by-qubit manner in the scenario when
the banknote is returned to the Bank for verification, provided the
banknote was created in a qubit-by-qubit manner (each qubit created
independently). The scheme remains secure in a case of verification by the classical communication at a distance, upon the assumption that the counterfeiter lies about the outputs in the independent manner during the verification procedure. 

We have thereby also made an explicit connection of the money schemes with the idea of a private key which is not classical, but from the other theory (generalized probability theory), exploring thereby directions presented in \cite{Gavinsky2012,qptmoney}. 
 
It is plausible that the proposed scheme 
inherits the security of the underlying, in our case the original semi-device
independent quantum key distribution protocol. Given the full proof
of security of the SDI QKD against a forward signaling adversary, as it is 
the case for the DI QKD \cite{DevInd09} (see \cite{DI-iid} for the latest breakthrough), it may follow that our suitably modified scheme is fully unforgeable. The sufficient modification concerns the communication in verification procedure. The counterfeiter would need to give the answer(s) $(a_A^i,a_F^i)$ after getting inputs $(x_A^i,x_F^i)$,  but \textit{before learning next inputs} $(x_A^{i+1},x_F^{i+1})$. In such a case each possible history-dependent lie can be treated safely as a part of the attack of the device, and hence would not affect the model. The rest of the proof would follow from similar arguments as above with a proper use of the concentration of martingales. It is therefore important to verify if the SDI protocol is fully secure. An intermediate step would be to extend the security proof
the presented SDI money scheme to its variant given in \cite{Wang2014}, prove there to
be secure against collective attacks. 

One might think that we could have used directly the scheme of the device dependent key secure against the quantum adversary \cite{DevInd09}, avoiding thereby unnatural assumption about altering outputs by the terminal in an independent manner during verification procedure. 
It is indeed straightforward to extend the idea presented here for a single Bank's branch with much weaker assumptions. However it needs suitable modifications leading to novel scheme(s), in order to be extended to the case with multiple Bank's branches. This approach therefore results in a  scheme fundamentally different from the original one and its follow-ups like the presented SDI money scheme.

We have compared the SDI money scheme with the protocol of the SDI QKD,
showing that they differ in three ways. Firstly, the money scheme requires a reliable quantum memory, while the SDI QKD does not. Most of quantum money schemes suffer from this problem, i.e., it is not a special property of our scheme  (however, see the recent proposal by A. Kent \cite{1806.05884}). Secondly, in principle, the money scheme does not need the number of runs of the experiment, as producing the key is out of focus, conversely to the goal of the SDI QKD protocol. However, the presented SDI scheme based on qubits leads to the banknotes of the significant quantum memory (as we exemplify in Section \ref{app:mem}), because number of qubits $n$ has to diminish the effect of fluctuations $\eta$. Fortunately, our security proof seems to be straightforwardly adaptable to the SDI schemes based on the SDI QKD protocol with more than two inputs on Bank's side and (if needed) the dimension of the system \cite{RAC1, RAC2}. Considering such an extension would be of high importance for more practical examples. Thirdly, our money scheme needs a moderate error tolerance, roughly speaking just little above the one implying one half of the possible key rate achievable in the corresponding SDI scheme. This, in principle opens an area for the robustness of the money scheme against noise. Given the banknotes are initially prepared at high quality, it can drop, significantly yet without compromising security against forgery, to the value corresponding to about half of the maximal possible key rate of the SDI protocol.

Given a more promising for practical realization variant of this scheme exists, one should consider its robust version, that can be realized in laboratory including all side effects, that may potentially open it for the attacks of hackers. This aspect of
the SDI QKD has been recently studied in \cite{Wang2014, Xu2017, Zhou2017, Chaturvedi2018}.

Another important direction of development would be checking if the
proposed scheme could be treated as an option for a user of the original Wiesner
scheme or its other extensions like Gavinsky's protocol.
The resulting scheme would give higher protection
against malicious money provider, matching the best of two approaches. In the presented scheme, the banknotes (even in case of the honest client Alice) are inevitably lost during their verification. It seems natural then (like it is done by Gavinsky \cite{Gavinsky2012}) to extend our scheme to the case of the \textit{transactions} which we also defer to the future work. 

Finally, in a bit speculative way, we have put forward a hypothesis called the \textit{Quantum Oresme-Copersnicus-Gresham Law}: an analogue of the classical law of the economy also known as Gresham's Law. This law states that the bad money (with lower intrinsic value) drives out the good one (with higher intrinsic value), as 
the latter is less often spend. We have exemplified this law on the basis of different realizations of the SDI money scheme, corresponding to the different values of the threshold leading to the acceptance of money. These speculations need further, more formal, exploration with examples based on more types of currencies, as well as an extension (what appears to be straightforward) to the case of resources within the paradigm of \cite{Multi-Res}.

\begin{acknowledgments}
	The work is supported by National Science Centre grant Sonata Bis 5 no. 2015/18/E/ST2/00327. 
	The authors would like to thank 
	Anubhav Chaturvedi,
	Ryszard P. Kostecki,
	and 
	Marek Winczewski
	for useful comments.
	M.S. thanks Or Sattath for the enlightening discussion about relating results presented as a poster at the 8th International Conference on Quantum Cryptography (QCrypt 2018) Shanghai, China, 27–31 August 2018.
\end{acknowledgments}

\onecolumngrid
\appendix

\section{Preliminary definitions}\label{app:def}

We will start by defining two crucial constants \( M \) and \( P_Q \), which come from \cite{sdi},
\begin{equation}
	M \coloneqq \frac{5 + \sqrt{3}}{4},\quad P_Q \coloneqq \cos^2\left( \frac{\pi}{8} \right).
\end{equation}
It is easy to see that \( P_Q > M/2 \).

Lets us now define notation used in the rest of the paper. By \(y\)'s we denote the inputs used by the Bank in order to create the money, \(x\)'s stand for the questions that the branches verifying Alice and Frederick ask them, \(\tilde{x}\) represents real value that Alice and Frederick input into the devices, and we use \(a\)'s for Alice's and Frederick's outputs. Furthermore, \(i\) in an upper index denotes the \(i\)-th run of the protocol that acts on the \(i\)-th quantum subsystem.
The general attack performed in the qubit-by-qubit manner (See Assumptions \ref{asm:5} and \ref{asm:6}) can be described by a probability measure on the data used in the verification protocol. Part of the data are generated by the Bank (inputs to the verification procedure) while the outputs \( a_A \) and \( a_F \) are generated by the Alice and Frederick according to their choice of the conditional distribution. The total joint distribution of the inputs and outputs reads 
\begin{equation}
	\bigotimes_{i=1}^n P(y^i_{0,B_A}, y^i_{0,B_F}, y^i_{1,B_A}, y^i_{1,B_F}, x^i_A, x^i_F, \tilde{x}^i_A, \tilde{x}^i_F) P\left(a^i_A,a^i_F | y^i_{0,B_A}, y^i_{0,B_F}, y^i_{1,B_A}, y^i_{1,B_F}, x^i_A, x^i_F, \tilde{x}^i_A, \tilde{x}^i_F \right).
\end{equation}
In what follows we will simplify it due to certain assumptions. We know, from the definition of money generating protocol, that if Frederick wants to verify the same banknote as Alice, then \(y\)'s are the same for all branches, so we can omit variables for each branch and write just \( y^i_0 \) and \( y^i_1 \). 
Furthermore, if Bank's branches input appropriate bits to the devices themselves, than we are sure that \( x^i_A = \tilde{x}^i_A\) and \( x^i_F = \tilde{x}^i_F \), obtaining
\begin{equation}
	\bigotimes_{i=1}^n P\left(y^i_0, y^i_1, x^i_A, x^i_F \right) P\left(a^i_A,a^i_F | y^i_0, y^i_1, x^i_A, x^i_F \right).
\end{equation}
\begin{observation}
	Our scheme remains secure if we allow Alice and Frederick to set device inputs, under assumption that they do it in an independent way in each run. Any run-independent cheating strategy of Alice or Frederick based on using inputs \( \tilde{x}^i_A, \tilde{x}^i_F \) different from \( x^i_A, x^i_F \) provided by the Bank can be incorporated into inner working of the untrusted devices and we can also omit it.
\end{observation}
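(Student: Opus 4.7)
The plan is to show that any cheating strategy in which Alice and Frederick replace the Bank's inputs $x^i_A, x^i_F$ with their own $\tilde{x}^i_A, \tilde{x}^i_F$ in a run-independent way is equivalent, from the Bank's perspective, to a standard attack of the form already covered by the main theorem. In other words, a run-independent input-substitution strategy factors through the device and can be absorbed as a preprocessing stage inside the untrusted measurement box, so nothing genuinely new happens.

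More concretely, the total joint distribution in the cheating scenario has the factorized form
\begin{equation}
\bigotimes_{i=1}^n P(y^i_0, y^i_1, x^i_A, x^i_F)\, Q^i(\tilde{x}^i_A, \tilde{x}^i_F \mid x^i_A, x^i_F)\, R^i(a^i_A, a^i_F \mid y^i_0, y^i_1, \tilde{x}^i_A, \tilde{x}^i_F),
\end{equation}
where the assumption of run-independence is precisely that $Q^i$ depends only on the inputs of run $i$ (and possibly local randomness). The first step I would take is to define a new per-run conditional distribution by marginalizing out $\tilde{x}$:
\begin{equation}
\tilde{R}^i(a^i_A, a^i_F \mid y^i_0, y^i_1, x^i_A, x^i_F) \coloneqq \sum_{\tilde{x}^i_A, \tilde{x}^i_F} Q^i(\tilde{x}^i_A, \tilde{x}^i_F \mid x^i_A, x^i_F)\, R^i(a^i_A, a^i_F \mid y^i_0, y^i_1, \tilde{x}^i_A, \tilde{x}^i_F).
\end{equation}
The joint distribution over the variables the Bank actually sees, namely $(y^i_0, y^i_1, x^i_A, x^i_F, a^i_A, a^i_F)_{i=1}^n$, is then exactly of the form assumed in Eq.~(\ref{eq:main-dist1}), with the honest device replaced by a new one whose internal workings first draw $\tilde{x}^i$ from $Q^i(\cdot \mid x^i)$ and then feed it to the original device.

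Once this reduction is in place, invoking the main theorem applied to the device described by $\tilde{R}^i$ immediately yields the same exponential bound on the forgery probability, since the bound depends only on the product structure and on the dimension of the states leaving the source (which is untouched by the preprocessing). The main subtlety I would need to verify carefully is that every assumption of the main theorem is preserved by this absorption: in particular \textbf{ASM5} and \textbf{ASM6} (independence of sources and measurements across runs) still hold because $Q^i$ was assumed to depend only on run $i$, and \textbf{ASM3}--\textbf{ASM4} (bounded dimension, no entanglement with dishonest parties) are untouched because the preprocessing is purely classical and local to the measurement terminal. Thus the only load-bearing hypothesis is the per-run independence of the lie, exactly as stated.

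The point I expect to be most delicate is making explicit that no information from previous runs feeds into $Q^i$; otherwise the attack would become adaptive and could, in principle, exploit correlations across rounds in the same way a forward-signaling device could. This is precisely why the observation is stated with the run-independence hypothesis, and why dropping it is conjectured (in Section \ref{sec:discusion}) to require a proof of security of the SDI QKD against the full forward-signaling adversary rather than a routine reduction.
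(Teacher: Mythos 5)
Your reduction is correct and is exactly the absorption argument the paper has in mind (the paper states this observation without an explicit proof): composing the run-local classical channel $Q^i$ with the run-local device $R^i$ yields a new run-local device $\tilde{R}^i$ of precisely the product form assumed in the security proof, and since a classical mixture of qubit measurement strategies is again a qubit measurement strategy, the single-run bounds $P_Q$ and $M$ still apply. Your closing remark about why per-run independence of $Q^i$ is the load-bearing hypothesis matches the paper's discussion of forward-signaling attacks.
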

Since we assume that \(y\)'s and \(x\)'s generated by Bank are fully random, what is possible due to Assumption \ref{asm:1}, we can rewrite the above formula as 
\begin{equation}\label{eq:main-dist}
	\mu \sim	\bigotimes_{i=1}^n U \left(y^i_0, y^i_1, x^i_A, x^i_F \right) P\left(a^i_A,a^i_F | y^i_0, y^i_1, x^i_A, x^i_F \right),
\end{equation}
where \(U\), here and in all measures defined later, stands for the uniform distribution over appropriate variables. 

Now we can define the set describing successful forgery, meaning that both Alice and Frederick are accepted using the same banknote.

\begin{equation}
	\mathcal{F} \coloneqq \left\{(y^i_0, y^i_1, x^i_A, a^i_A, x^i_F, a^i_F)_{i=1}^n \in \{ 0,1 \}^{6n} : \left|\left\{ i \in [n] : a^i_A = y^i_{x^i_A} \right\}\right| > \theta n, \left|\left\{ i \in [n] : a^i_F = y^i_{x^i_F} \right\}\right| > \theta n \right\},
\end{equation}
\begin{equation}
	S \coloneqq (y^i_0, y^i_1, x^i_A, a^i_A, x^i_F, a^i_F)_{i=1}^n \in \{ 0,1 \}^{6n}.
\end{equation}
We also define sequences
\begin{equation}
	S^i \coloneqq (y^i_0, y^i_1, x^i_A, a^i_A, x^i_F, a^i_F).
\end{equation} 

Now we can make the following observation that is an easy consequence of the security proof of \cite{sdi}. It is important to notice that we need here Assumptions \ref{asm:1}, \ref{asm:2}, \ref{asm:3}, and \ref{asm:4} since there are also necessary in \cite{sdi} For clarity, we change notation by substituting \(B\) and \(E\) by \(A\) and \(F\), respectively.
\begin{observation}\label{obs:M}
	\begin{equation}
		P_{AF}(a_0) + P_{AF}(a_1) \leq M.
	\end{equation}
\end{observation}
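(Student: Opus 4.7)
The plan is to reduce Observation \ref{obs:M} to the monogamy bound established in the security proof of the SDI QKD protocol of Pawłowski and Brunner \cite{sdi}, essentially via a change of notation. First I would unfold the quantities: by construction, $P_{AF}(a_0)$ denotes the averaged success probability that Alice correctly outputs $y_0$ on the input $x_A = 0$, and $P_{AF}(a_1)$ the analogous probability that Frederick correctly outputs $y_1$ on $x_F = 1$, the averages being taken with respect to the uniform distribution over $(y_0, y_1)$ guaranteed by Assumption \ref{asm:1}. Under Assumptions \ref{asm:2}, \ref{asm:3}, and \ref{asm:4}, the single-run attack is implemented as follows: the source emits a qubit state $\rho_{y_0 y_1}$ (by the dimension bound ASM3), unentangled with any system held by the dishonest parties (ASM4), and Alice and Frederick perform any quantum strategy -- splitting the qubit, performing joint or local measurements, exchanging classical communication -- to produce their outputs, with honest post-processing used at the Bank's side (ASM2).

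This reduces the problem to exactly the two-receiver $2\to 1$ Random Access Code scenario analyzed in \cite{sdi}: a sender encodes two uniformly random bits into one qubit, and two receivers try to extract one bit each (in our case, opposite bits, since we condition on $x_A = 0$ and $x_F = 1$). Pawłowski and Brunner prove a CHSH-type monogamy inequality whose quantum maximum over qubit strategies evaluates to exactly $M = (5+\sqrt{3})/4$, which is precisely the right-hand side needed. The identification is immediate: the legitimate receiver $B$ in \cite{sdi} corresponds to Alice, and the eavesdropper $E$ (who in \cite{sdi} is allowed quantum side information) corresponds to Frederick; because Assumption \ref{asm:4} only strengthens the constraint on $E$ by forbidding side entanglement with the source, the bound of \cite{sdi} applies a fortiori.

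The step that deserves the most care -- and the main obstacle -- is the faithful translation of scenarios. One must verify that the implicit conditional probabilities $P_{AF}(a_0)$ and $P_{AF}(a_1)$, as defined through the attack measure $\mu$ of Eq.~\eqref{eq:main-dist}, really match the guessing probabilities bounded in \cite{sdi}, up to the relabeling $(B,E) \leftrightarrow (A,F)$ and the restriction to the inputs $x_A = 0$, $x_F = 1$. Once this identification is in place, the inequality follows from a direct citation: the proof of this observation is bookkeeping rather than fresh technical work, and the substantive use of the bound $M$ will appear only later, when it is combined with Hoeffding-type concentration to control the full $n$-run qubit-by-qubit attack.
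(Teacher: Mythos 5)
Your overall strategy---reduce the observation to the monogamy results of Paw\l{}owski and Brunner and treat the rest as a change of notation---is the same as the paper's; its proof likewise notes that Assumptions \ref{asm:1}--\ref{asm:4} are needed precisely because they are needed in \cite{sdi}, and simply relabels $B,E$ there as $A,F$ here. The problem is that your argument rests on the claim that \cite{sdi} directly proves ``a CHSH-type monogamy inequality whose quantum maximum \ldots evaluates to exactly $M=(5+\sqrt{3})/4$,'' so that ``the inequality follows from a direct citation.'' That is the one place where the write-up has a real hole: the two-party bound with the constant $M$ is not stated in \cite{sdi}; it must be derived by combining two separate ingredients from that paper. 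Concretely, one takes Eq.~(12) of \cite{sdi},
\begin{equation}
P_{AF}(a_0)+P_{AF}(a_1)+P_{AF}(a_0\oplus a_1)\le \tfrac{3}{2}\left(1+\tfrac{1}{\sqrt{3}}\right),
\end{equation}
which bounds the sum of \emph{three} guessing probabilities including the XOR bit, and eliminates the unwanted XOR term using Eq.~(13) of \cite{sdi}, namely $P_{AF}(a_0)+P_{AF}(a_1)-1\le P_{AF}(a_0\oplus a_1)$. Adding the two gives $2\bigl(P_{AF}(a_0)+P_{AF}(a_1)\bigr)\le \tfrac{3}{2}\bigl(1+\tfrac{1}{\sqrt{3}}\bigr)+1$, i.e.\ exactly $M=(5+\sqrt{3})/4$. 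Without this short algebraic step the citation does not land on the stated constant, so your proposal as written does not actually establish the numerical bound.

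A secondary, more minor point: $P_{AF}(a_0)$ and $P_{AF}(a_1)$ denote the guessing probabilities of the \emph{collaborating coalition} for the two bits (each averaged over the inputs), not ``Alice's success probability on $x_A=0$'' and ``Frederick's on $x_F=1$'' separately. The passage from the individual quantities $P_A,P_F$ to the coalition quantities $P_{AF}$ happens later, in the proof of Lemma~\ref{lem:u-bound}, where one uses that collaboration can only increase the guessing probability; conflating the two at this stage obscures what Observation~\ref{obs:M} is actually asserting. Your scenario-matching discussion (dimension bound, no side entanglement, honest post-processing) is otherwise consistent with the paper's brief remark to the same effect.
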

\begin{proof}
	From Eq. (12) of \cite{sdi} and the comment that follows the equation we know that
	\begin{equation}
		P_{AF}(a_0) + P_{AF}(a_1) + P_{AF}(a_0 \oplus a_1) \leq \frac{3}{2} \left( 1 + \frac{1}{\sqrt{3}} \right).
	\end{equation}
	Using Eq. (13) of \cite{sdi},
	\begin{equation}
		P_{AF}(a_0) + P_{AF}(a_1) - 1 \leq P_{AF}(a_0 \oplus a_1),
	\end{equation}	
	we obtain
	\begin{equation}
		P_{AF}(a_0) + P_{AF}(a_1) \leq \frac{1}{2} \left( \frac{3}{2} \left( 1 + \frac{1}{\sqrt{3}} \right) + 1\right) = \frac{5 + \sqrt{3}}{4}.
	\end{equation}
	The right side is equal to \(M\), which completes the proof.
\end{proof}

\section{Main lemmas} \label{app:lem}

We will use numerously the concentration property of a distribution of independently
distributed $n$ random variables on $[0,1]$ due to Hoeffding, of the form
\begin{equation}\label{eq:id-conc}
P( |{\bar X} - \mathbf{E} \bar{X} |\geq \eta ) \leq 2 \mathrm{e}^{- 2 n \eta^2}.
\end{equation}
where ${\bar X} = (1/n) \sum_i X_i$. 

For a bitstring $x$ of length $n$ we will denote
by $|x|_0$ the number of occurrences of symbol $0$ in $x$ (analogously
$|x|_1$ will denote the number of $1$s in $x$). Thus,
\begin{equation}\label{eq:iid-conc}
	P \left( \left| \frac{|x|_0}{n} - \frac{1}{2}\right|\geq \eta \right) \leq 2 \mathrm{e}^{- 2 \eta^2 n},
\end{equation}
where $\eta \geq 0$.  Due to the above concentration, probability
mass function is concentrated on the so-called $\eta$-\textit{typical sequences}, defined as the values of $x$ satisfying $||x|_0/n - 1/2| \leq \eta$.
In other words, for a set 
\begin{equation}
	\mathcal{T}_{n,\eta} \coloneqq \left\{ x : \left| \frac{|x|_0}{n}  - \frac{1}{2} \right| \leq \eta \right\}, 
\end{equation}
there is,
\begin{equation}
	P_{\sim U(x)} ( x \in \mathcal{T}_{n,\eta}) \geq 1 - 2 \mathrm{e}^{- 2 \eta^2 n}
\end{equation}
where the probability is taken from a uniform distribution \(U(x)\) of sequences \( x \coloneqq (x^i)_{i=1}^n \) over $\{0,1\}^n$. In particular, for two sequences $x_A$ and $x_F$ drawn independently at random from $\{0,1\}^n$,
\begin{equation}\label{eq:oplus-typ}
	P (x_A\oplus x_F \in \mathcal{T}_{n,\eta}) \geq 1 - 2 \mathrm{e}^{- 2 \eta^2 n},
\end{equation}
where by $\oplus$ we mean the bit-wise XOR operation on the bits
of $x_A$ and $x_F$. Indeed, for any fixed $x_A$ the distribution
of $x_F\oplus x_A$ is uniform if such was that of $x_F$.  
We can use then the typicality argument and average over $p(x_A)$.

At the expense of small error one can deal only with such as $S$
that have  $\eta$-typical inputs $x_A$ and $x_F$.
Such $S$ will be called $\eta$-\textit{typical}:
\begin{equation}
	S\equiv (y^i_0, y^i_1, x^i_A, a^i_A, x^i_F, a^i_F)_{i=1}^n \mbox{ is called $\eta$-\textit{typical} iff } x_A\oplus x_F \equiv (x_A^i \oplus x_F^i)_{i=1}^n \in \mathcal{T}_{n,\eta}.
\end{equation}
The set of $\eta$-typical $S$ will be denoted as $T(\eta)$.

In what follows we will show that the probability of 
acceptance of a banknote twice, i.e., \( P(\mathcal{F}) \), is equal to the probability
accepting it twice in a different scenario (the XOR scenario). In the latter Alice gets $x_A$ while Frederick is given $x_A\oplus x_F$. In spite
of the fact that it will not be the case in real life, this
transformation of the scenario (and the corresponding probability
measure) will simplify our considerations.

The XOR scenario is obtained from the original one by the following
map on the events $S$:

\begin{equation}
	S = (y^i_0, y^i_1, x^i_A, a^i_A, x^i_F, a^i_F)_{i=1}^n \mapsto^{\pi} S' \coloneqq (y_0^i,y_1^i,x_A^i,a_A^i,x_F^i\oplus x_{A}^i,a_F^i)_{i=1}^n.
\end{equation}

We will refer to the transformed event as the one
having $x_\oplus^i$ on the position where $x_F^i$ is in $S$:
\begin{equation}
	S' \coloneqq (y_0^i,y_1^i,x_A^i,a_A^i,x^i_{\oplus},a_F^i)_{i=1}^n.
\end{equation}

We define the set of all forged $S'$ in a way analogous to the definition of the set $\mathcal{F}$:
\begin{equation}\label{eq:fprim}
\mathcal{F}' \coloneqq \left\{(y^i_0, y^i_1, x^i_A, a^i_A, x^i_{\oplus}, a^i_F)_{i=1}^n \in \{ 0,1 \}^{6n} : \left|\left\{ i \in [n] : a^i_A = y^i_{x^i_A} \right\}\right| > \theta n,
  \left|\left\{ i \in [n] : a^i_F = y^i_{x_\oplus^i \oplus x^i_A} \right\}\right| > \theta n \right\}.
\end{equation}

A new probability measure $\mu'$ defined on the set of 
events $S'$ is defined as
\begin{equation}
	\mu'\sim \bigotimes_{i=1}^n U \left(y^i_0, y^i_1, x^i_A, x^i_\oplus \oplus x^i_A  \right) P\left(a^i_A,a^i_F | y^i_0, y^i_1, x^i_A, x^i_\oplus \oplus x^i_A \right).
\end{equation}
\begin{observation}
	The map $\pi$:
	\begin{enumerate}
		\item is bijective and involutive,
		\item satisfies $S \in \mathcal{F} \Leftrightarrow S' \in \mathcal{F}'$,
		\item satisfies $\mu'(S') = \mu(S)$.
	\end{enumerate}
\end{observation}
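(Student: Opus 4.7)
The plan is to unpack the definition of $\pi$ carefully: it acts on each tuple $S^i$ by replacing only the fifth coordinate, sending $x_F^i$ to $x_F^i \oplus x_A^i$, and leaving all five other entries untouched. So in the image $S'$ we have $x_\oplus^i = x_F^i \oplus x_A^i$. With this normal form in hand, each of the three claims reduces to a short check.

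For claim (1), I would apply $\pi$ twice to a single run. The fifth coordinate is sent $x_F^i \mapsto x_F^i \oplus x_A^i \mapsto (x_F^i \oplus x_A^i) \oplus x_A^i = x_F^i$, while the other coordinates are fixed throughout. Hence $\pi \circ \pi$ is the identity on $\{0,1\}^{6n}$, so $\pi$ is an involution and in particular a bijection.

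For claim (2), the acceptance condition for Alice only involves $(y_0^i, y_1^i, x_A^i, a_A^i)$, which is preserved by $\pi$, so Alice's count is identical in $S$ and $S'$. For Frederick, membership in $\mathcal{F}$ requires enough runs with $a_F^i = y^i_{x_F^i}$, while membership in $\mathcal{F}'$ (Eq.~(\ref{eq:fprim})) requires enough runs with $a_F^i = y^i_{x_\oplus^i \oplus x_A^i}$. Substituting $x_\oplus^i = x_F^i \oplus x_A^i$ yields $x_\oplus^i \oplus x_A^i = x_F^i$, so the two conditions are identical run-by-run, and the biconditional follows.

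For claim (3), both $\mu$ and $\mu'$ factor over runs, so it suffices to check $\mu(S^i) = \mu'(\pi(S^i))$. The conditional-probability factor matches because the substitution $x_\oplus^i \oplus x_A^i = x_F^i$ turns the factor in $\mu'$ into the factor in $\mu$. For the prior, the map $(y_0, y_1, x_A, x_F) \mapsto (y_0, y_1, x_A, x_F \oplus x_A)$ is a bijection of $\{0,1\}^4$ onto itself, so it pushes the uniform distribution to the uniform distribution; equivalently, $U(y_0^i, y_1^i, x_A^i, x_\oplus^i \oplus x_A^i) = U(y_0^i, y_1^i, x_A^i, x_F^i) = 1/16$. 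There is no real obstacle here—this observation is a purely combinatorial change-of-variables lemma whose only purpose is to justify the subsequent reduction $P(\mathcal{F}) = P(\mathcal{F}')$ used in the main security argument.
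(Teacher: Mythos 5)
Your proof is correct and follows essentially the same route as the paper: unpack $\pi$ as the per-run substitution $x_F^i \mapsto x_F^i \oplus x_A^i$, note that XORing with $x_A^i$ twice is the identity (giving involutivity and hence bijectivity), and then observe that the same cancellation $x_\oplus^i \oplus x_A^i = x_F^i$ identifies Frederick's acceptance condition in $\mathcal{F}'$ with that in $\mathcal{F}$ and matches the factors of $\mu'$ with those of $\mu$ run by run. The only cosmetic difference is that you justify the invariance of the uniform prior via a push-forward/change-of-variables remark on $\{0,1\}^4$, whereas the paper simply substitutes into the expression for $\mu'$; these are the same argument.
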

\begin{proof}
	The bijectivity follows directly from the fact that
	$(x_A,x_F)$ is bijectively mapped to $(x_A,x_A\oplus x_F)$.
	The first input is preserved, while the second one can
	be reconstructed uniquely by XORing inputs. It is also easy to see that \( \pi \) is an involution, since $(x_A,x_A\oplus x_F)$ is mapped back to $(x_A,x_F)$.
	
	We show now the Property 2. Let $S \in \mathcal{F}$.
	This happens if and only if 
	\begin{equation}
		\left|\left\{ i \in [n] : a^i_A = y^i_{x^i_A} \right\}\right| > \theta n \text{ and } \left|\left\{ i \in [n] : a^i_F = y^i_{x^i_F} \right\}\right| > \theta n.
	\end{equation}
	The event $S'$ equals 
	$(y_0^i,y_1^i,x_A^i,a_A^i,x_\oplus^i,a_F^i)_{i=1}^n$.
	By definition of \(\mathcal{F}' \), we have that \( S' \in \mathcal{F}' \) if and only if
	\begin{equation}
	\left|\left\{ i \in [n] : a^i_A = y^i_{x^i_A} \right\}\right| > \theta n \text{ and } \left|\left\{ i \in [n] : a^i_F = y^i_{x_\oplus^i \oplus x^i_A} \right\}\right| > \theta n.
	\end{equation}
	Since the left conditions are identical, we only have to prove equality on the right conditions. By definition of a map \( \pi^{-1} \), we obtain
	\begin{equation}
	\left|\left\{ i \in [n] : a^i_F = y^i_{x_\oplus^i \oplus x^i_A} \right\}\right| = \left|\left\{ i \in [n] : a^i_F = y^i_{( x^i_A \oplus x^i_F ) \oplus x^i_A} \right\}\right| = \left|\left\{ i \in [n] : a^i_F = y^i_{ x^i_F} \right\}\right|,
	\end{equation}
	what proves an appropriate equality and implies that $S \in \mathcal{F} \Leftrightarrow S' \in \mathcal{F}'$.

	Finally, we argue that the Property 3 also holds.
	Let us fix arbitrary $S$. Hence, $x_{\oplus}^i = x_{A}^i\oplus x_F^i$ in definition of $S'$, and
	\begin{equation}
	\begin{split}
	\mu'(S') &= \bigotimes_{i=1}^n U \left(y^i_0, y^i_1, x^i_A, x^i_\oplus \oplus x^i_A  \right) P\left(a^i_A,a^i_F | y^i_0, y^i_1, x^i_A, x^i_\oplus \oplus x^i_A \right) \\
	&\stackrel{\pi^{-1}}{=}	\bigotimes_{i=1}^n U \left(y^i_0, y^i_1, x^i_A, (x^i_F\oplus x_A^i) \oplus x^i_A \right) P\left(a^i_A,a^i_F | y^i_0, y^i_1, x^i_A, (x^i_F\oplus x_A^i) \oplus x^i_A \right) \\
	&=	\bigotimes_{i=1}^n U \left(y^i_0, y^i_1, x^i_A, x^i_F \right) P\left(a^i_A,a^i_F | y^i_0, y^i_1, x^i_A, x^i_F \right) = \mu(S),
	\end{split}
	\end{equation}
	where \( \pi^{-1} \) above denotes that equality follows from the properties of the inverse of map \( \pi \), which due to involution property is equal to \( \pi \) .
\end{proof}

Alice, as before, gets bit \( x^i_A \), but Frederick obtains XOR of bits \( x^i_A \) and \( x^i_F \). Despite this, ``original'' box, due to ``wirings'',  receives \( x^i_A \) and \( x^i_F \).
We have then an important corollary, that we can focus now
on the XOR scenario because the probability of
forgery in the latter equals to the probability of forgery
in the former. 
\begin{corollary} \label{cor:accacc'}
	\begin{equation}\label{eq:accacc'}
	P_{\sim \mu}(\mathcal{F}) = P_{\sim\mu'}(\mathcal{F}').
	\end{equation}
\end{corollary}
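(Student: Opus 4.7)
The plan is to derive the corollary as an immediate bookkeeping consequence of the three properties of the map $\pi$ that were just established. The strategy is: write $P_{\sim\mu}(\mathcal{F})$ as a sum of $\mu$-weights over $\mathcal{F}$, use bijectivity of $\pi$ to reindex the sum by elements of $\mathcal{F}'$, and then use pointwise equality of the weights to recognize the result as $P_{\sim\mu'}(\mathcal{F}')$.

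First, I would expand the left-hand side by definition:
\begin{equation}
P_{\sim\mu}(\mathcal{F}) = \sum_{S\in \mathcal{F}} \mu(S).
\end{equation}
Next, I invoke Properties~1 and~2 of the observation: $\pi$ is a bijection on $\{0,1\}^{6n}$, and its restriction to $\mathcal{F}$ is a bijection onto $\mathcal{F}'$. This means reindexing the sum via $S = \pi^{-1}(S')$ converts it into a sum over $\mathcal{F}'$:
\begin{equation}
\sum_{S\in \mathcal{F}} \mu(S) = \sum_{S'\in \mathcal{F}'} \mu\bigl(\pi^{-1}(S')\bigr).
\end{equation}
Finally, Property~3 of the observation asserts that $\mu(S) = \mu'(\pi(S))$ for every $S$, equivalently $\mu(\pi^{-1}(S')) = \mu'(S')$ for every $S'$. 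Substituting gives
\begin{equation}
\sum_{S'\in \mathcal{F}'} \mu\bigl(\pi^{-1}(S')\bigr) = \sum_{S'\in \mathcal{F}'} \mu'(S') = P_{\sim\mu'}(\mathcal{F}'),
\end{equation}
which completes the argument.

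Since all the nontrivial work (checking bijectivity of the XOR wiring, checking that $\mathcal{F}$ and $\mathcal{F}'$ correspond under the map, and checking that pulling the XOR into the conditioning of the distribution leaves weights unchanged) was absorbed into the preceding observation, there is no real obstacle at this step; the corollary is essentially a change-of-variables formula. The only thing to be careful about is making explicit that the bijection $\pi$ maps $\mathcal{F}$ onto $\mathcal{F}'$ (not merely into it), so that the reindexed sum covers all of $\mathcal{F}'$ exactly once — but this is immediate from involutivity of $\pi$ combined with Property~2 applied in both directions.
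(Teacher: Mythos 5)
Your proposal is correct and matches the paper's approach: the paper states this corollary as an immediate consequence of the preceding observation (bijectivity of $\pi$, the equivalence $S \in \mathcal{F} \Leftrightarrow S' \in \mathcal{F}'$, and the weight equality $\mu'(S') = \mu(S)$), and your change-of-variables summation is exactly the intended bookkeeping. Nothing is missing.
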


One can focus on the typical sequences \(S\), i.e., those for which \( \textbf{x}_\oplus \in \mathcal{T}(\eta) \), at the expense of exponentially small inaccuracy in estimating the probability of forgery due to measure $\mu'$.

\begin{lemma} \label{lem:epstyp}
	\begin{equation}\label{eq:epstyp}
	P(\mathcal{F}') \leq P(\mathcal{F}' \cap T(\eta)) + \epsilon(\eta),
	\end{equation}
	with $\epsilon(\eta) = 2 \exp (- 2 \eta^2 n)$.
\end{lemma}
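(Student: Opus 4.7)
My plan is to split the probability of $\mathcal{F}'$ according to whether the event is $\eta$-typical (i.e., whether $\bm{x}_\oplus \in \mathcal{T}_{n,\eta}$) and then bound the atypical contribution by a direct application of Hoeffding's inequality in the form already stated as Eq.~\eqref{eq:oplus-typ}.

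First, I would write the trivial decomposition
\begin{equation*}
P_{\sim\mu'}(\mathcal{F}') = P_{\sim\mu'}\bigl(\mathcal{F}' \cap T(\eta)\bigr) + P_{\sim\mu'}\bigl(\mathcal{F}' \cap T(\eta)^c\bigr),
\end{equation*}
and then drop $\mathcal{F}'$ from the second summand to get
\begin{equation*}
P_{\sim\mu'}(\mathcal{F}') \leq P_{\sim\mu'}\bigl(\mathcal{F}' \cap T(\eta)\bigr) + P_{\sim\mu'}\bigl(T(\eta)^c\bigr).
\end{equation*}
So everything reduces to showing $P_{\sim\mu'}(T(\eta)^c) \leq \epsilon(\eta)$.

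Next, I would observe that by the definition of $\mu'$ the marginal over the variable $x_\oplus^i$ is uniform on $\{0,1\}$, independently in each run. Indeed, $\mu'$ was built so that its uniform factor $U(y_0^i, y_1^i, x_A^i, x_\oplus^i \oplus x_A^i)$ is uniform in the fourth slot; since for any fixed $x_A^i$ the map $x_\oplus^i \mapsto x_\oplus^i \oplus x_A^i$ is a bijection on $\{0,1\}$, $x_\oplus^i$ itself is uniform, and the $n$ runs are independent by construction. Hence the $\mu'$-law of $\bm{x}_\oplus = (x_\oplus^i)_{i=1}^n$ is exactly the uniform distribution on $\{0,1\}^n$.

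Finally, $T(\eta)^c$ is, by definition, the event $\bm{x}_\oplus \notin \mathcal{T}_{n,\eta}$. Applying the concentration inequality~\eqref{eq:iid-conc} (equivalently \eqref{eq:oplus-typ}) to the uniformly distributed $\bm{x}_\oplus$ yields
\begin{equation*}
P_{\sim\mu'}\bigl(T(\eta)^c\bigr) = P\bigl(\bm{x}_\oplus \notin \mathcal{T}_{n,\eta}\bigr) \leq 2\exp(-2\eta^2 n) = \epsilon(\eta),
\end{equation*}
which, combined with the decomposition above, gives Eq.~\eqref{eq:epstyp}. There is no real obstacle here; the only point to verify carefully is that the $\mu'$-marginal of $\bm{x}_\oplus$ is uniform and i.i.d., and this is immediate from the product structure of $\mu'$ together with the bijective change of variable $x_\oplus^i \mapsto x_\oplus^i \oplus x_A^i$ in the uniform factor.
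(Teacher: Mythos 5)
Your proof is correct and takes essentially the same route as the paper's: split $P(\mathcal{F}')$ according to whether $\bm{x}_\oplus$ is $\eta$-typical, bound the atypical contribution by the probability that $\bm{x}_\oplus\notin\mathcal{T}_{n,\eta}$ (using that the $\mu'$-marginal of $\bm{x}_\oplus$ is uniform and independent of the attack), and apply the Hoeffding bound of Eq.~(\ref{eq:oplus-typ}). Your explicit justification of the uniformity of $\bm{x}_\oplus$ via the bijection $x_\oplus^i\mapsto x_\oplus^i\oplus x_A^i$ is, if anything, slightly cleaner than the paper's phrasing.
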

\begin{proof}
	With a little abuse of notation we will mean by 
	$(y_0,y_1,x_A,a_A,x_\oplus,a_F)$ the properly ordered sequence of tuples 	$(y_0^i,y_1^i,x_A^i,a_A^i,x_\oplus^i,a_F^i)_{i=1}^n$,
	where $y_0 = (y_0^i)_{i=1}^n$, and by analogy the same for other symbols.
	
	We will show first a sequence of (in)equalities:
	\begin{equation}
	\begin{split}
	P(\mathcal{F}') =& \sum_{S'\in \mathcal{F}' } P_{\sim \mu'}(S')\\
	=&	\sum_{ S'\in \mathcal{F}' \cap T(\eta) } P_{\sim \mu'}(S') + \sum_{ S'\in \mathcal{F}'\setminus T(\eta) } P_{\sim \mu'}(S') \\
	=& \sum_{x_A,x_\oplus,y_0,y_1 \subset S' : S' \in \mathcal{F}' \cap T(\eta)} u(x_A) u(x_\oplus) u(y_0)u(y_1) P(a_A,a_F|y_0,y_1,x_A,x_\oplus\oplus x_A)  \\
	+&	\sum_{x_A,x_\oplus,y_0,y_1 \subset S' : S' \in \mathcal{F}' \setminus T(\eta)} u(x_A) u(x_\oplus) u(y_0)u(y_1) P(a_A,a_F|y_0,y_1,x_A,x_\oplus\oplus x_A)  \\
	\leq&	\sum_{x_A,x_\oplus,y_0,y_1 \subset S' : S' \in \mathcal{F}'\cap T(\eta)} u(x_A) u(x_\oplus) u(y_0)u(y_1) P(a_A,a_F|y_0,y_1,x_A,x_\oplus\oplus x_A)  \\
	+&	\sum_{x_A,x_\oplus,y_0,y_1 : (x_\oplus)\in \mathcal{T}(\eta)  } u(x_A) u(x_\oplus) u(y_0)u(y_1)   \\
	\leq&	P(\mathcal{F}'\cap T(\eta)) + \epsilon(\eta).
	\end{split}
	\end{equation}
	
	We have used the fact that the distribution of $x_\oplus$ is the same (uniform) irrespectively of a particular
	attack. This is because the distributions of $x_A$ and $x_F$ with respect to the measure $\mu$  are uniform and independent from the attack, as being prior to the attack, while $x_\oplus$ has distribution of $x_A\oplus x_F$ according to the definition of a measure $\mu'$. In the last inequality we have used typicality argument from Eq. (\ref{eq:oplus-typ}).
\end{proof}

Due to random nature of variable \(x_\oplus\),
\begin{equation}\label{eq:accgivenxor}
P(\mathcal{F}' \cap T(\eta)) = \sum_{x_\oplus\in T(\eta)} P(x_\oplus) P(\mathcal{F}'|x_\oplus).
\end{equation}

The advantage of the measure $\mu'$ is that we can easily
divide the set of each run $i$ according to the values of 
the $x_{\oplus}^i$. Technical as it sounds, it will simplify
the argument. In the runs where $x_{\oplus}^i=0$, the best strategy
achieves quantum value $P_Q$ for both Alice and Frederick. However,
for $x_{\oplus}=1$ they are in a position of Alice guessing 
the \textit{opposite} bit to the one which Frederick is at the same
time in this run $i$ to guess. Hence, they are limited 
as it is shown in the original paper by Paw\l{}owski and Brunner 
\cite{sdi}.

From now on, we will fix $ \bm{x}_\oplus \coloneqq (x^i_\oplus)_{i=1}^n$ and prove
a common bound on guessing for all of its typical values.
We can then define new conditional measure \( \mu'' \) that depends on $ \bm{x}_\oplus $ as
\begin{equation} \label{eq:mu''}
\mu'' \coloneqq \mu_{|\tilde{\bm{x}}_\oplus} = \delta(\tilde{\bm{x}}_\oplus, \bm{x}_\oplus) \bigotimes_{i=1}^n u(x^i_A) u(y^i_0) u (y^i_1) P^i(a^i_F,a^i_A | x^i_A, x^i_\oplus \oplus x^i_A, y^i_0, y^i_1).
\end{equation}

We will now show that, on average, the 
forgeries Alice and Frederick have total number of correctly guessed
bits of $y_0,y_1$ bit-strings bounded from above by certain value.
Let us first define the set of indexes 
\begin{equation} \label{eq:D}
D(\bm{x}_\oplus) \coloneqq \{ i \in [n] : x^i_\oplus = 0 \},
\end{equation}
and its complement \( \bar{D}(\bm{x}_\oplus) \). It is important to notice that, for runs in the set \( D(\bm{x}_\oplus) \), Alice and Frederick will be asked about the same Bank's bit and in the case of \( \bar{D}(\bm{x}_\oplus) \) they will have to guess two different bits of the Bank.

Then we can consider four types or random variables defined on \( \Omega_i \), each
depending on the value of the $\bm{x}_\oplus$. It is important to notice that these variables describe the probabilities of guessing appropriate bits in \(i\)th run, by Alice and Frederick respectively and furthermore have strong connection with the definition of \( \mathcal{F} \).
\begin{equation}
X_{i,A}^{\bm{x}_\oplus} \coloneqq 
\begin{cases}
\delta\left(a^i_A,y^i_{x_A^i}\right) &: i \in D(\bm{x}_\oplus) \\
0 &: i \in {\bar D}(\bm{x}_{\oplus})
\end{cases},
\qquad
X_{i,F}^{\bm{x}_\oplus} \coloneqq 
\begin{cases}
\delta\left(a^i_F,y^i_{x_A^i}\right) &: i \in D(\bm{x}_\oplus) \\
0 &: i \in {\bar D}(\bm{x}_{\oplus})
\end{cases},
\end{equation}
\begin{equation}
Y_{i,A}^{\bm{x}_\oplus} \coloneqq 
\begin{cases}
\delta\left(a^i_A,y^i_{x_A^i}\right) &: i \in \bar{D}(\bm{x}_\oplus) \\
0 &: i \in D(\bm{x}_{\oplus})
\end{cases},
\qquad
Y_{i,F}^{\bm{x}_\oplus} \coloneqq 
\begin{cases}
\delta\left(a^i_F,y^i_{x_A^i \oplus 1}\right) &: i \in \bar{D}(\bm{x}_\oplus) \\
0 &: i \in D(\bm{x}_{\oplus})
\end{cases},
\end{equation}
where the sample space is defined as
\begin{equation}
\Omega^i_{\bm{x}_\oplus} \coloneqq \{ S'^i : x^i_\oplus(S'^i) = x_\oplus^i \},
\end{equation}
 while \( x^i_\oplus(S'^i) \) denotes taking a variable with a label \( x^i_\oplus \) from the sequence \(S'^i\).
We will also define sums
\begin{equation}
X_{i}^{\bm{x}_\oplus} \coloneqq X_{i,A}^{\bm{x}_\oplus} + X_{i,F}^{\bm{x}_\oplus}, \quad Y_{i}^{\bm{x}_\oplus} \coloneqq Y_{i,A}^{\bm{x}_\oplus} + Y_{i,F}^{\bm{x}_\oplus},
\end{equation}
and the random variables built from
$X_i$ and $Y_i$, that is their sums:
\begin{align}
{\bar X}^A_{\bm{x}_\oplus} \coloneqq \sum_{i \in [n]} X_{i,A}^{\bm{x}_\oplus}, \quad {\bar X}^F_{\bm{x}_\oplus} \coloneqq \sum_{i \in [n]} X_{i,F}^{\bm{x}_\oplus}, \quad {\bar X}_{\bm{x}_\oplus} \coloneqq \sum_{i \in [n]} X_i^{\bm{x}_\oplus}, \\
{\bar Y}^A_{\bm{x}_\oplus} \coloneqq \sum_{i \in [n]} Y_{i,A}^{\bm{x}_\oplus}, \quad {\bar Y}^F_{\bm{x}_\oplus} \coloneqq \sum_{i \in [n]} Y_{i,F}^{\bm{x}_\oplus}, \quad {\bar Y}_{\bm{x}_\oplus} \coloneqq \sum_{i \in [n]} Y_i^{\bm{x}_\oplus}.
\end{align}
From Eq. (\ref{eq:mu''}) we know that the measure \( \mu'' \) is a product of measures 
\begin{equation} \label{eq:mu''_i}
\mu''_i \coloneqq u(x^i_A) u(y^i_0) u (y^i_1) P^i(a^i_F,a^i_A | x^i_A, x^i_\oplus \oplus x^i_A, y^i_0, y^i_1)
\end{equation}
what implies that \( {\bar X}^A_{\bm{x}_\oplus}, {\bar X}^F_{\bm{x}_\oplus}, {\bar X}_{\bm{x}_\oplus}, {\bar Y}^A_{\bm{x}_\oplus}, {\bar Y}^F_{\bm{x}_\oplus}, {\bar Y}_{\bm{x}_\oplus} \) are described by Poisson distribution.

The respective averages over measure
$\mu''$ read
\begin{align}
\mathbf{E} {\bar X}_{\bm{x}_\oplus} = \sum_{i \in [n]} \sum_{(x_A^i,y_0^i,y_1^i)= (\hat{x}_A^i,\hat{y}_0^i,\hat{y}_1^i)\in \{0,1\}^3} \frac{1}{8} \sum_{(a_A^i,a_F^i)\in\{0,1\}^2} P_i(a_A^i,a_F^i|\hat{x}_A^i,\hat{x}_A^i\oplus x_\oplus^i, \hat{y_0}^i,\hat{y_1}^i) X^{\bm{x}_\oplus}_i(S'^i), \\
\mathbf{E} {\bar Y}_{\bm{x}_\oplus} = \sum_{i \in [n]} \sum_{(x_A^i,y_0^i,y_1^i)= 
	(\hat{x}_A^i,\hat{y}_0^i,\hat{y}_1^i)\in \{0,1\}^3} \frac{1}{8}  \sum_{(a_A^i,a_F^i)\in\{0,1\}^2}
P_i(a_A^i,a_F^i|\hat{x}_A^i,\hat{x}_A^i\oplus x_\oplus^i, \hat{y_0}^i,\hat{y_1}^i) Y^{\bm{x}_\oplus}_i(S'^i).
\end{align}

Although the above averages are defined on the whole range $[n]$, they depend only
on their respective subsets:
\begin{equation}
\mathbf{E} {\bar X}_{\bm{x}_\oplus} = \sum_{i \in D(\bm{x}_\oplus)}  \sum_{(x_A^i,y_0^i,y_1^i)= 	(\hat{x}_A^i,\hat{y}_0^i,\hat{y}_1^i)\in \{0,1\}^3} \frac{1}{8} \sum_{(a_A^i,a_F^i)\in\{0,1\}^2}P_i(a_A^i,a_F^i|\hat{x}_A^i,\hat{x}_A^i, \hat{y_0}^i,\hat{y_1}^i)
 \left[\delta\left(a^i_A,\hat{y}^i_{x_A^i}\right) + \delta\left(a^i_F,\hat{y}^i_{x_A^i}\right)\right],
\end{equation}
\begin{equation}
\begin{split}
\mathbf{E} {\bar Y}_{\bm{x}_\oplus} = \sum_{i \in \bar{D}(\bm{x}_\oplus)} \sum_{(x_A^i,y_0^i,y_1^i)= 	(\hat{x}_A^i,\hat{y}_0^i,\hat{y}_1^i)\in \{0,1\}^3} \frac{1}{8} \sum_{(a_A^i,a_F^i)\in\{0,1\}^2} P_i(a_A^i,a_F^i|\hat{x}_A^i,\hat{x}_A^i\oplus 1, \hat{y_0}^i,\hat{y_1}^i)\\
\times \left[\delta\left(a^i_A,\hat{y}^i_{x_A^i}\right) + \delta\left(a^i_F,\hat{y}^i_{1\oplus x_A^i}\right)\right].
\end{split}
\end{equation}

We will prove now, that the above averages are bounded if the
attack is done under assumption that the adversaries are quantum and they attack in a qubit-by qubit manner (see Assumptions \ref{asm:3}--\ref{asm:6}).

\begin{lemma}\label{lem:u-bound}
	\begin{equation}\label{eq:main-up}
	\mathbf{E} {\bar X}_{\bm{x}_\oplus} + \mathbf{E} {\bar Y}_{\bm{x}_\oplus}  \leq 2 P_Q \left|D(\textbf{x}_\oplus)\right| + M \left|\bar{D}(\textbf{x}_\oplus)\right|.
	\end{equation}
\end{lemma}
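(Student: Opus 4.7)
The plan is to decompose $\mathbf{E}\bar{X}_{\bm{x}_\oplus} + \mathbf{E}\bar{Y}_{\bm{x}_\oplus}$ into single-round expectations and then apply two single-round bounds: the single-player SDI guessing bound $P_Q$ on indices $i \in D(\bm{x}_\oplus)$ and the two-player monogamy bound $M$ from Observation \ref{obs:M} on indices $i \in \bar{D}(\bm{x}_\oplus)$. The key observation that makes this reduction valid is that, under the qubit-by-qubit attack Assumptions \ref{asm:3}--\ref{asm:6}, the conditional measure $\mu''$ in Eq.~(\ref{eq:mu''}) factorizes as a tensor product of the single-round measures $\mu''_i$ in Eq.~(\ref{eq:mu''_i}). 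In particular, each variable $X_i^{\bm{x}_\oplus}$ and $Y_i^{\bm{x}_\oplus}$ depends only on the data of the $i$-th round, so its expectation under $\mu''$ equals its expectation under $\mu''_i$.

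For each $i \in D(\bm{x}_\oplus)$ we have $x^i_F = x^i_A$, so both Alice and Frederick are asked to guess the \emph{same} bit $y^i_{x^i_A}$ of the Bank. By Assumption \ref{asm:3} the source emits a qubit $\rho_{y_0^i,y_1^i}$, by Assumption \ref{asm:4} this qubit is unentangled from the parties, and by Assumption \ref{asm:6} each party measures only its own subsystem. The marginal guessing probability of either party is therefore bounded by the optimal single-player SDI guessing probability $P_Q$ from \cite{sdi}, which yields $\mathbf{E}_{\mu''_i} X_i^{\bm{x}_\oplus} \leq 2 P_Q$. For each $i \in \bar{D}(\bm{x}_\oplus)$ we have $x^i_F = x^i_A \oplus 1$, so Alice is asked to guess $y^i_{x^i_A}$ while Frederick is asked to guess the \emph{complementary} bit $y^i_{x^i_A \oplus 1}$. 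Applied at the single-round level this is exactly the scenario of Observation \ref{obs:M}, giving $\mathbf{E}_{\mu''_i} Y_i^{\bm{x}_\oplus} \leq M$. Summing over $i$ and using that $|D(\bm{x}_\oplus)| + |\bar{D}(\bm{x}_\oplus)| = n$ delivers the claimed inequality.

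The main delicate point I expect is verifying that the single-round bounds of \cite{sdi} genuinely apply to $\mu''_i$ for each fixed $\bm{x}_\oplus$. Although we have conditioned on the global bit $x^i_\oplus$, the marginal distribution of $(x^i_A, y^i_0, y^i_1)$ under $\mu''_i$ remains uniform (cf.\ Eq.~(\ref{eq:mu''_i})), and the only way information about other rounds could enter is via the device, which is forbidden by Assumptions \ref{asm:5} and \ref{asm:6}. One therefore truly reduces to the prepare-and-measure scenario of Paw\l{}owski--Brunner on a single qubit, where $P_Q$ and $M$ are the tight bounds, and the lemma follows by linearity of expectation.
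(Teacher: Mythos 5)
Your proof is correct and follows essentially the same route as the paper's: split the rounds into $D(\bm{x}_\oplus)$ and $\bar{D}(\bm{x}_\oplus)$, use the product structure of $\mu''$ to reduce to single-round expectations, bound each party's guess by $P_Q$ when they are asked the same bit, and invoke Observation \ref{obs:M} when they must guess complementary bits. The only step you gloss over --- which the paper spells out --- is that on $\bar{D}(\bm{x}_\oplus)$ the sum of the two \emph{individual marginal} guessing probabilities must first be upper-bounded by the corresponding \emph{collaborative} quantities $P_{AF}(y_0^i)+P_{AF}(y_1^i)$ before Observation \ref{obs:M} applies.
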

\begin{proof}
	We will separately prove that the first term of LHS is bounded 
	by the first term of RHS, and later that the second terms bound
	each other, respectively. The best quantum strategy for a single person (say, Alice) in a single run of experiment is upper bounded by $P_Q$, while the other
	party is asked to guess \textit{the same bit} as Alice was asked, 
	so can copy her answer. We have

	\begin{equation}
	\begin{split}
	\mathbf{E} {\bar X}_{\bm{x}_\oplus} =& \sum_{i \in D(\bm{x}_\oplus)} \sum_{(x_A^i,y_0^i,y_1^i)= 		(\hat{x}_A^i,\hat{y}_0^i,\hat{y}_1^i)\in \{0,1\}^3} \frac{1}{8} \sum_{(a_A^i,a_F^i)\in\{0,1\}^2}	P_i(a_A^i,a_F^i|\hat{x}_A^i,\hat{x}_A^i, \hat{y_0}^i,\hat{y_1}^i) \left[\delta\left(a^i_A,\hat{y}^i_{x_A^i}\right) + \delta\left(a^i_F,\hat{y}^i_{x_A^i}\right)\right] \\
	=& \sum_{i \in D(\bm{x}_\oplus)} \sum_{(x_A^i,y_0^i,y_1^i)= 	(\hat{x}_A^i,\hat{y}_0^i,\hat{y}_1^i)\in \{0,1\}^3} \frac{1}{8} \sum_{(a_A^i,a_F^i)\in\{0,1\}^2} P_i(a_A^i,a_F^i|\hat{x}_A^i,\hat{x}_A^i,	\hat{y_0}^i,\hat{y_1}^i) \delta\left(a^i_A,\hat{y}^i_{x_A^i}\right)\\
	+&	\sum_{i \in D(\bm{x}_\oplus)}  \sum_{(x_A^i,y_0^i,y_1^i)= (\hat{x}_A^i,\hat{y}_0^i,\hat{y}_1^i)\in \{0,1\}^3} \frac{1}{8} \sum_{(a_A^i,a_F^i)\in\{0,1\}^2} P_i(a_A^i,a_F^i|\hat{x}_A^i,\hat{x}_A^i, \hat{y_0}^i,\hat{y_1}^i) \delta\left(a^i_F,\hat{y}^i_{x_A^i}\right)\\
	=&	\sum_{i \in D(\bm{x}_\oplus)} \sum_{(x_A^i,y_0^i,y_1^i)= (\hat{x}_A^i,\hat{y}_0^i,\hat{y}_1^i)\in \{0,1\}^3} \frac{1}{8} \sum_{(a_A^i)\in\{0,1\}} P_i(a_A^i|\hat{x}_A^i,\hat{x}_A^i, \hat{y_0}^i,\hat{y_1}^i) \delta \left(a^i_A,\hat{y}^i_{x_A^i}\right)\\
	+& \sum_{i \in D(\bm{x}_\oplus)} \sum_{(x_A^i,y_0^i,y_1^i)= 	(\hat{x}_A^i,\hat{y}_0^i,\hat{y}_1^i)\in \{0,1\}^3} \frac{1}{8}  \sum_{(a_F^i)\in\{0,1\}}P_i(a_F^i|\hat{x}_A^i,\hat{x}_A^i,\hat{y_0}^i,\hat{y_1}^i) \delta\left(a^i_F,\hat{y}^i_{x_A^i}\right)\\
	=& \sum_{i \in D(\bm{x}_\oplus) } \mathbf{E} X_{i,\bm{x}_\oplus}^A + \mathbf{E} X_{i,\bm{x}_\oplus}^F.
	\end{split}
	\end{equation}
	
	Now, for each $i \in D$, there is:
	\begin{equation}
	\mathbf{E} X_{i,\bm{x}_\oplus}^A = \frac{1}{2} \left(P_A(y_0|x_A=0) + P_A(y_1|x_A=1)\right) \leq P_Q,
	\end{equation}
	where we here used Eq. (7) of \cite{sdi} (note the change of notation: our $x, y_i, a_A$ correspond to \( y, a_i, b \) there, respectively).
	Analogously, for $i \in D$,
	\begin{equation}
	\mathbf{E} X_{i,\bm{x}_\oplus}^F = \frac{1}{2} \left(P_F(y_0|x_F=0) + P_F(y_1|x_F=1)\right) \leq P_Q.
	\end{equation}
	
	Summing the above inequalities over $i\ \in D$ we obtain $\mathbf{E} {\bar X}_{\bm{x}_\oplus} \leq 2 P_Q |D(\bm{x}_\oplus)|$. 
	
	More elaborative is relating the second terms of (\ref{eq:main-up}).
	We begin analogously:
	
	\begin{equation}
	\begin{split}
	\mathbf{E}{\bar Y}_{\bm{x}_\oplus} =&  \sum_{i \in \bar{D}(\bm{x}_\oplus)} \sum_{(x_A^i,y_0^i,y_1^i)= 
		(\hat{x}_A^i,\hat{y}_0^i,\hat{y}_1^i)\in \{0,1\}^3}\frac{1}{8}  \sum_{(a_A^i,a_F^i)\in\{0,1\}^2}
	P_i(a_A^i,a_F^i|\hat{x}_A^i,\hat{x}_A^i\oplus 1,
	\hat{y_0}^i,\hat{y_1}^i)\\
	&\qquad\qquad\qquad\qquad\qquad\qquad\qquad\qquad\qquad\qquad\qquad \times \left[\delta\left(a^i_A,\hat{y}^i_{x_A^i}\right) + \delta\left(a^i_F,\hat{y}^i_{1\oplus x_A^i}\right)\right]\\
	=& \sum_{i \in \bar{D}(\bm{x}_\oplus)} \sum_{(x_A^i,y_0^i,y_1^i)= 
		(\hat{x}_A^i,\hat{y}_0^i,\hat{y}_1^i)\in \{0,1\}^3}\frac{1}{8}  \sum_{(a_A^i,a_F^i)\in\{0,1\}^2}
	P_i(a_A^i,a_F^i|\hat{x}_A^i,\hat{x}_A^i\oplus 1,
	\hat{y_0}^i,\hat{y_1}^i)\cdot \delta\left(a^i_A,\hat{y}^i_{x_A^i}\right)\\
	+& \sum_{i \in \bar{D}(\bm{x}_\oplus)} \sum_{(x_A^i,y_0^i,y_1^i)= 
		(\hat{x}_A^i,\hat{y}_0^i,\hat{y}_1^i)\in \{0,1\}^3}\frac{1}{8}  \sum_{(a_A^i,a_F^i)\in\{0,1\}^2}
	P_i(a_A^i,a_F^i|\hat{x}_A^i,\hat{x}_A^i\oplus 1,
	\hat{y_0}^i,\hat{y_1}^i)\cdot \delta\left(a^i_F,\hat{y}^i_{1\oplus x_A^i}\right)\\
	=& \sum_{i\in \bar{D}(\bm{x}_\oplus)} \mathbf{E}Y_{i,\bm{x}_\oplus}^A + \mathbf{E}Y_{i,\bm{x}_\oplus}^F.
	\end{split}
	\end{equation}
	
	For each $i \in \bar{D}(\bm{x}_\oplus)$,
	\begin{equation}
	\begin{split}
	\mathbf{E}Y_{i,\bm{x}_\oplus}^A + \mathbf{E}Y_{i,\bm{x}_\oplus}^F =& \frac{1}{2}\left[P_{A}(y_0^i|x_A^i=0) + P_{F}(y_1^i|x_A^i=0)+P_{A}(y_1^i|x_A^i=1) + P_{F}(y_0^i|x_A^i=1)\right]\\
	\leq& \frac{1}{2}\left[P_{AF}(y_0^i|x_A^i=0) + P_{AF}(y_1^i|x_A^i=0)+P_{AF}(y_1^i|x_A^i=1) + P_{AF}(y_0^i|x_A^i=1)\right]\\
	=& \frac{1}{2}\left[(P_{AF}(y_0^i|x_A^i=0) + P_{AF}(y_0^i|x_A^i=1) \right] + \frac{1}{2} \left[P_{AF}(y_1^i|x_A^i=0) + P_{AF}(y_1^i|x_A^i=1) \right]\\
	=& P_{AF}(y_0^i) + P_{AF}(y_1^i) \leq M,
	\end{split}
	\end{equation}
	where in the pre-last inequality we have used the fact,
	that Alice and Frederic can collaborate.This can only increase
	the average probability of guessing. In the last inequality
	we have rephrased the results of \cite{sdi} as in Observation \ref{obs:M}.
	Summing up over $i\in \bar{D}(\bm{x}_\oplus)$, we obtain
	\begin{equation}
	\sum_{i\in \bar{D}(\bm{x}_\oplus)} \mathbf{E}Y^A_{i,\bm{x}_\oplus} + \mathbf{E}Y^F_{i,\bm{x}_\oplus} \leq M \left|\bar{D}(\bm{x}_\oplus)\right|, 
	\end{equation}
	as it was claimed.
\end{proof}

We have shown above that the average number of guessed
bits has an upper bound. We are going now to argue about the concentration properties of the random variables which are involved in the above process.

We assume here that $\bm{x}_\oplus$ is fixed, and results
in the well defined sets $D(\bm{x}_\oplus)$ and $\bar{D}(\bm{x}_\oplus)$. 
For brevity, we will sometimes omit $\bm{x}_\oplus$ from the notation of $D$. We will also define subsequences \(S'_0\) and \(S'_1\) of particular realization of strategy \(S'\),
\begin{equation}
S'_0 \coloneqq (S'^i)_{ \{ i : x_\oplus^i = 0 \} },\qquad
S'_1 \coloneqq (S'^i)_{ \{ i : x_\oplus^i = 1 \} }.
\end{equation}

In the spirit of the above technical lemma, we will
consider four random variables, each reporting the distance between the theoretical average value of a number of guessed inputs and the observed number of guessed inputs for the respective dishonest party on the respective set ($D(\bm{x}_\oplus)$ or $\bar{D}(\bm{x}_\oplus)$),
\begin{align}
\begin{split}
c_1 &= \begin{cases}
1 : \left|\mathbf{E}\bar{X}^{A}_{\bm{x}_\oplus} - \bar{X}^{A}_{\bm{x}_\oplus}\right| \leq \eta|D(\bm{x}_\oplus)|\\
0 : \text{else}
\end{cases},\qquad
c_2 = \begin{cases}
1 : \left|\mathbf{E}\bar{Y}^{A}_{\bm{x}_\oplus} - \bar{Y}^{A}_{\bm{x}_\oplus}\right| \leq \eta|\bar{D(\bm{x}_\oplus)}|\\
0 : \text{else}
\end{cases},\\
c_3 &= \begin{cases}
1 : \left|\mathbf{E}\bar{X}^{F}_{\bm{x}_\oplus} - \bar{X}^{F}_{\bm{x}_\oplus}\right| \leq \eta|D(\bm{x}_\oplus)|\\
0 : \text{else}
\end{cases},\qquad
c_4 = \begin{cases}
1 : \left|\mathbf{E}\bar{Y}^{F}_{\bm{x}_\oplus} - \bar{Y}^{F}_{\bm{x}_\oplus}\right| \leq \eta|\bar{D}(\bm{x}_\oplus)|\\
0 : \text{else}
\end{cases}.
\end{split}\end{align}

Due to the concentration property given in Eq. (\ref{eq:id-conc}) on the total measure $\mu''$, that is product of measures \( \mu''_i \) (see Eq. (\ref{eq:mu''_i})),
base on which the joined distribution of  $X_1,...,X_4$ random variables defined above, we can bound probability of $0$'s as follows:
\begin{align}
p(c_1 = 0) \leq& 2\mathrm{e}^{-2\eta^2 |D(\bm{x}_\oplus)|} \eqqcolon \epsilon_1(\eta),\qquad
p(c_2 = 0) \leq 2\mathrm{e}^{-2\eta^2 |\bar{D}(\bm{x}_\oplus)|} \eqqcolon \epsilon_2(\eta),\\
p(c_3 = 0) \leq& 2\mathrm{e}^{-2\eta^2 |D(\bm{x}_\oplus)|} \eqqcolon \epsilon_3(\eta),\qquad
p(c_4 = 0) \leq 2\mathrm{e}^{-2\eta^2 |\bar{D}(\bm{x}_\oplus)|} \eqqcolon \epsilon_4(\eta),
\end{align}
where the probability of the above measures is taken 
over the measure $\mu''$.

Now, thanks to the union bound, we obtain:
\begin{corollary}\label{cor:deviation}
	 For any $\bm{x}_\oplus$, there is
	\begin{equation}
	\begin{split}
	P_{\sim \mu''} & \left( \left|\mathbf{E}\bar{X}^{A}_{\bm{x}_\oplus} - \bar{X}^{A}_{\bm{x}_\oplus}\right|\leq \eta\left|D(\bm{x}_\oplus)\right| \right.
	\wedge \left|\mathbf{E}\bar{Y}^{A}_{\bm{x}_\oplus} - \bar{Y}^{A}_{\bm{x}_\oplus}\right| \leq \eta\left|\bar{D}(\bm{x}_\oplus)\right|\\
	&\wedge \left|\mathbf{E}\bar{X}^{F}_{\bm{x}_\oplus} - \bar{X}^{F}_{\bm{x}_\oplus}\right| \leq \eta\left|D(\bm{x}_\oplus)\right|
	 \left. \wedge \left|\mathbf{E}\bar{Y}^{F}_{\bm{x}_\oplus} - \bar{Y}^{F}_{\bm{x}_\oplus}\right| \leq \eta\left|\bar{D}(\bm{x}_\oplus)\right| \right)\\
	\geq& 1 - \sum_i \epsilon_i(\eta).
	\end{split}
	\end{equation}
	\label{cor:main}
\end{corollary}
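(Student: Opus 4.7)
The plan is to apply the union bound to the four concentration statements that the authors have already laid out immediately before the corollary. The strategy is entirely a consequence of independence across runs under $\mu''$ combined with Hoeffding's inequality (Eq. (\ref{eq:id-conc})) applied to each of the four sums separately.

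First I would unpack why each of the four random variables concentrates. For a fixed $\bm{x}_\oplus$, the conditional measure $\mu''$ factorizes as $\bigotimes_i \mu''_i$ (as displayed in Eq. (\ref{eq:mu''}) and Eq. (\ref{eq:mu''_i})). Consequently $\bar{X}^A_{\bm{x}_\oplus}$ is a sum of the $|D(\bm{x}_\oplus)|$ independent $\{0,1\}$-valued variables $\{X^{\bm{x}_\oplus}_{i,A}\}_{i\in D(\bm{x}_\oplus)}$, and symmetrically for $\bar{X}^F_{\bm{x}_\oplus}$ (sum over $D$), $\bar{Y}^A_{\bm{x}_\oplus}$ and $\bar{Y}^F_{\bm{x}_\oplus}$ (both sums over $\bar{D}$). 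Hoeffding's inequality in the form Eq. (\ref{eq:id-conc}), applied to each of these four i.i.d.-style sums on their respective index sets, yields the stated tail bounds $P(c_j = 0) \leq \epsilon_j(\eta)$ for $j = 1,\dots,4$, with $\epsilon_1=\epsilon_3=2e^{-2\eta^2|D(\bm{x}_\oplus)|}$ and $\epsilon_2=\epsilon_4=2e^{-2\eta^2|\bar{D}(\bm{x}_\oplus)|}$.

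Next I would take complements. Let $E$ denote the event in the corollary (the conjunction of the four deviation bounds). Then $E^c = \{c_1=0\}\cup\{c_2=0\}\cup\{c_3=0\}\cup\{c_4=0\}$. By subadditivity of probability,
\begin{equation}
P_{\sim\mu''}(E^c) \leq \sum_{j=1}^4 P_{\sim\mu''}(c_j=0) \leq \sum_{j=1}^4 \epsilon_j(\eta),
\end{equation}
and taking complements again gives $P_{\sim\mu''}(E) \geq 1 - \sum_j \epsilon_j(\eta)$, which is the claim.

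There is no real obstacle here; the corollary is a two-line consequence of the preparation. The only subtlety worth stating explicitly in the write-up is that Hoeffding is legitimately applicable to each of the four sums because $\mu''$ is a product measure across runs (Assumptions \ref{asm:5} and \ref{asm:6}) and, for each sum, the contributing summands are supported on a single index set ($D(\bm{x}_\oplus)$ or $\bar{D}(\bm{x}_\oplus)$) of size which is determined by the fixed $\bm{x}_\oplus$; the variables outside that index set are identically $0$ and so do not affect either the mean or the tail. Once this is noted, the union bound closes the argument.
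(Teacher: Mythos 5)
Your proposal is correct and is essentially identical to the paper's own argument: the paper derives the corollary by the union bound applied to the four Hoeffding tail bounds $p(c_j=0)\leq\epsilon_j(\eta)$ stated immediately before it, exactly as you do. Your explicit remark that each sum effectively ranges only over its index set $D(\bm{x}_\oplus)$ or $\bar{D}(\bm{x}_\oplus)$ (the other summands being identically zero) is a useful clarification that the paper leaves implicit, but it does not change the argument.
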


The above rather technical results are summarized bellow
in the upper bound on the total number of guesses. Namely, 
we will show that for a fixed $\bm{x}_\oplus$, and an attack 
defining the measure $\mu''=\mu'_{|\bm{x}_\oplus}$, the 
random variable of total number of guesses defined as a
function of $S'(\bm{x}_\oplus)$ sampled from $\mu''$
is bounded from above with high probability, as
it is close to the sum of averages that are bounded. Indeed, let us define the random variable of total number of guesses,
\begin{equation}
\bar{Z}_{\bm{x}_\oplus} \coloneqq  \bar{X}^{A}_{\bm{x}_\oplus} + \bar{X}^{F}_{\bm{x}_\oplus} + \bar{Y}^{A}_{\bm{x}_\oplus} + \bar{Y}^{F}_{\bm{x}_\oplus}.
\end{equation}
We additionally define two other useful variables,
\begin{equation}
\bar{Z}^A_{\bm{x}_\oplus} \coloneqq  \bar{X}^{A}_{\bm{x}_\oplus}  + \bar{Y}^{A}_{\bm{x}_\oplus},\qquad
\bar{Z}^F_{\bm{x}_\oplus} \coloneqq   \bar{X}^{F}_{\bm{x}_\oplus} +  \bar{Y}^{F}_{\bm{x}_\oplus}.
\end{equation} 
\begin{lemma} \label{lem:small-event}
	For any fixed $\bm{x}_\oplus \in T(\eta)$,
	\begin{equation}
	P_{\sim \mu''}(\bar{Z}_{\bm{x}_\oplus} > B) \leq \sum_{i=1}^4 \epsilon_i(\eta),
	\end{equation}	
	where
	\begin{equation}
	B \coloneqq 2 P_Q\left(\frac{1}{2} + \eta\right)n + M \left(\frac{1}{2}+\eta\right)n +2\eta n.
	\end{equation}
\end{lemma}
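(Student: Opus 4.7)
The plan is to combine the expectation bound from Lemma \ref{lem:u-bound} with the concentration result in Corollary \ref{cor:deviation}, using the typicality of $\bm{x}_\oplus$ to control the sizes of $D(\bm{x}_\oplus)$ and $\bar{D}(\bm{x}_\oplus)$. Specifically, I would note that $\bar{Z}_{\bm{x}_\oplus} = \bar{X}_{\bm{x}_\oplus} + \bar{Y}_{\bm{x}_\oplus}$, so the expected value factorizes as $\mathbf{E}\bar{Z}_{\bm{x}_\oplus} = \mathbf{E}\bar{X}_{\bm{x}_\oplus} + \mathbf{E}\bar{Y}_{\bm{x}_\oplus}$, and Lemma \ref{lem:u-bound} bounds this by $2P_Q |D(\bm{x}_\oplus)| + M|\bar{D}(\bm{x}_\oplus)|$.

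Next I would invoke the assumption $\bm{x}_\oplus \in T(\eta)$, which by definition yields $|D(\bm{x}_\oplus)| = |\bm{x}_\oplus|_0 \leq (1/2 + \eta)n$, and likewise $|\bar{D}(\bm{x}_\oplus)| \leq (1/2 + \eta)n$. Substituting these bounds gives
\begin{equation}
\mathbf{E}\bar{Z}_{\bm{x}_\oplus} \leq 2P_Q\left(\tfrac{1}{2}+\eta\right)n + M\left(\tfrac{1}{2}+\eta\right)n.
\end{equation}

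Then I would apply Corollary \ref{cor:deviation}: on the event $E$ where all four deviations $|\mathbf{E}\bar{X}^A_{\bm{x}_\oplus}-\bar{X}^A_{\bm{x}_\oplus}|$, $|\mathbf{E}\bar{X}^F_{\bm{x}_\oplus}-\bar{X}^F_{\bm{x}_\oplus}|$, $|\mathbf{E}\bar{Y}^A_{\bm{x}_\oplus}-\bar{Y}^A_{\bm{x}_\oplus}|$, $|\mathbf{E}\bar{Y}^F_{\bm{x}_\oplus}-\bar{Y}^F_{\bm{x}_\oplus}|$ are simultaneously at most $\eta|D(\bm{x}_\oplus)|$ or $\eta|\bar{D}(\bm{x}_\oplus)|$ respectively, the triangle inequality yields
\begin{equation}
\bar{Z}_{\bm{x}_\oplus} \leq \mathbf{E}\bar{Z}_{\bm{x}_\oplus} + 2\eta|D(\bm{x}_\oplus)| + 2\eta|\bar{D}(\bm{x}_\oplus)| = \mathbf{E}\bar{Z}_{\bm{x}_\oplus} + 2\eta n.
\end{equation}
Chaining this with the expectation bound gives $\bar{Z}_{\bm{x}_\oplus} \leq B$ on the event $E$, and by the union bound $P_{\sim\mu''}(E^c) \leq \sum_{i=1}^4 \epsilon_i(\eta)$, which is exactly the claimed tail bound.

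The only step requiring care is the bookkeeping between the two ``halves'' of $\bar{Z}_{\bm{x}_\oplus}$: the variables $\bar{X}^{A/F}_{\bm{x}_\oplus}$ have their fluctuations measured against $\eta|D(\bm{x}_\oplus)|$, while $\bar{Y}^{A/F}_{\bm{x}_\oplus}$ are controlled by $\eta|\bar{D}(\bm{x}_\oplus)|$; the key observation is that $|D(\bm{x}_\oplus)| + |\bar{D}(\bm{x}_\oplus)| = n$ exactly, so after adding the four deviations one obtains $2\eta n$ uniformly in $\bm{x}_\oplus$, producing the $+2\eta n$ slack in the definition of $B$. No additional estimate is needed beyond assembling these pieces.
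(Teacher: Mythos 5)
Your proposal is correct and follows essentially the same route as the paper's proof: bound $\mathbf{E}\bar{Z}_{\bm{x}_\oplus}$ via Lemma \ref{lem:u-bound}, control $|D(\bm{x}_\oplus)|$ and $|\bar{D}(\bm{x}_\oplus)|$ by $(1/2+\eta)n$ using typicality, and absorb the four fluctuations from Corollary \ref{cor:deviation} into the $2\eta n$ slack via the union bound. Your bookkeeping $|D(\bm{x}_\oplus)|+|\bar{D}(\bm{x}_\oplus)|=n$ and the direction of the one-sided deviation inequalities are handled cleanly, so nothing is missing.
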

\begin{proof}
	From the Corollary \ref{cor:main}, omitting the modulus, we obtain a sequence of inequalities
	\begin{equation}
	\begin{split}
	1 - \sum_{i=1}^4\epsilon_i(\eta) \leq &  P_{\sim \mu''}\left( \left|\mathbf{E}\bar{X}^{A}_{\bm{x}_\oplus} - \bar{X}^{A}_{\bm{x}_\oplus}\right| \leq \eta\left|D(\bm{x}_\oplus)\right|\right.
	\wedge \left|\mathbf{E}\bar{Y}^{A}_{\bm{x}_\oplus} - \bar{Y}^{A}_{\bm{x}_\oplus}\right| \leq \eta\left|\bar{D}(\bm{x}_\oplus)\right|\\
	&\qquad\wedge \left|\mathbf{E}\bar{X}^{F}_{\bm{x}_\oplus} - \bar{X}^{F}_{\bm{x}_\oplus}\right| \leq \eta\left|D(\bm{x}_\oplus)\right|
	\wedge \left. \left|\mathbf{E}\bar{Y}^{F}_{\bm{x}_\oplus} - \bar{Y}^{F}_{\bm{x}_\oplus}\right| \leq \eta\left|\bar{D}(\bm{x}_\oplus)\right| \right)\\
	\leq &  P_{\sim \mu''}\left( 
	\mathbf{E}\bar{X}^{A}_{\bm{x}_\oplus} \leq  \bar{X}^{A}_{\bm{x}_\oplus} + \eta \left|D(\bm{x}_\oplus)\right|\right.
	\wedge \mathbf{E}\bar{Y}^{A}_{\bm{x}_\oplus} \leq \bar{Y}^{A}_{\bm{x}_\oplus} + \eta\left|\bar{D}(\bm{x}_\oplus)\right|\\
	&\qquad\wedge \mathbf{E}\bar{X}^{F}_{\bm{x}_\oplus} \leq \bar{X}^{F}_{\bm{x}_\oplus} + \eta\left|D(\bm{x}_\oplus)\right|
	\left. \wedge  \mathbf{E}\bar{Y}^{F}_{\bm{x}_\oplus} \leq \bar{Y}^{F}_{\bm{x}_\oplus} + \eta\left|\bar{D}(\bm{x}_\oplus)\right| 
	\right)\\
	\leq&  P_{\sim \mu''}\left( \bar{X}^{A}_{\bm{x}_\oplus} + \bar{Y}^{A}_{\bm{x}_\oplus} +
	\bar{X}^{F}_{\bm{x}_\oplus} + \bar{Y}^{F}_{\bm{x}_\oplus} \leq \mathbf{E}\bar{X}^{A}_{\bm{x}_\oplus} + \mathbf{E}\bar{Y}^{A}_{\bm{x}_\oplus}
	+ \mathbf{E}\bar{X}^{F}_{\bm{x}_\oplus} + \mathbf{E}\bar{Y}^{F}_{\bm{x}_\oplus} +
	2\eta n \right)\\
	\leq& P_{\sim \mu''}\left(\bar{X}^{A}_{\bm{x}_\oplus} + \bar{Y}^{A}_{\bm{x}_\oplus} +
	\bar{X}^{F}_{\bm{x}_\oplus} + \bar{Y}^{F}_{\bm{x}_\oplus} \leq 2 P_Q|D(\bm{x}_\oplus)| + M |\bar{D}(\bm{x}_\oplus)| +2\eta n \right)\\
	=& P_{\sim \mu''}\left(\bar{Z}_{\bm{x}_\oplus} \leq 2 P_Q|D(\bm{x}_\oplus)| + M |\bar{D}(\bm{x}_\oplus)| +2\eta n \right)\\
	\leq& P_{\sim \mu''}(\bar{Z}_{\bm{x}_\oplus} \leq B).
	\end{split}
	\end{equation}
	In the second inequality we have used Lemma \ref{lem:u-bound}. In the next one we have used definition
	of $\bar{Z}_{\bm{x}_\oplus}$, and then we have used the typicality 
	of $\bm{x}_\oplus$, which implies upper bounds on the 
	power of sets $D(\bm{x}_\oplus)$ and $\bar{D}(\bm{x}_\oplus)$.
	This implies
	\begin{equation}
	P_{\sim \mu''} (\bar{Z}_{\bm{x}_\oplus} > B) \leq \sum_{i=1}^4 \epsilon_i(\eta),
	\end{equation}
	as we have claimed.
\end{proof}

Let us recall now definition of \(\mathcal{F}'\) from Eq. (\ref{eq:fprim})
\begin{equation}
\mathcal{F}' \coloneqq \left\{(y^i_0, y^i_1, x^i_A, a^i_A, x^i_{\oplus}, a^i_F)_{i=1}^n \in \{ 0,1 \}^{6n} :  \left|\left\{ i \in [n] : a^i_A = y^i_{x^i_A} \right\}\right| > \theta n,
 \left|\left\{ i \in [n] : a^i_F = y^i_{x_\oplus^i \oplus x^i_A} \right\}\right| > \theta n \right\}.
\end{equation}
In what follows for clarity we will explicitly show dependence on \(\theta\) using notation \( \mathcal{F}_{\theta}' \).

\begin{observation} \label{obs:ACCgiven}
	Let the acceptance threshold be $\theta > B/2$.	Then
	\begin{equation} \label{eq:ACCgiven}
	P(\mathcal{F}_{\theta}'\cap T(\eta)) \leq  \epsilon'_2(\eta) \coloneqq \max_{\bm{x}_\oplus \in T(\eta)} \epsilon_2(\eta, \bm{x}_\oplus),
	\end{equation}
	where \( \epsilon_2(\eta, \bm{x}_\oplus) = \sum_{i=1}^{4} \epsilon_i (\eta) \).
\end{observation}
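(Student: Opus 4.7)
The plan is to reduce $P(\mathcal{F}_\theta' \cap T(\eta))$ to a tail estimate on the total-guesses random variable $\bar{Z}_{\bm{x}_\oplus}$ for each fixed typical $\bm{x}_\oplus$, and then invoke Lemma \ref{lem:small-event}. First I would write, using Eq.~(\ref{eq:accgivenxor}) restricted to the typical set,
\[
P(\mathcal{F}_\theta' \cap T(\eta)) \;=\; \sum_{\bm{x}_\oplus \in T(\eta)} p(\bm{x}_\oplus)\, P_{\sim \mu''}(\mathcal{F}_\theta'),
\]
where $\mu'' = \mu'(\,\cdot\,|\bm{x}_\oplus)/p(\bm{x}_\oplus)$ is the conditional product measure introduced in Eq.~(\ref{eq:mu''}). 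It therefore suffices to bound $P_{\sim \mu''}(\mathcal{F}_\theta')$ uniformly for $\bm{x}_\oplus \in T(\eta)$.

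Next I would translate the forgery event into a lower bound on $\bar{Z}_{\bm{x}_\oplus} = \bar{Z}^A_{\bm{x}_\oplus} + \bar{Z}^F_{\bm{x}_\oplus}$. Since $D(\bm{x}_\oplus)$ and $\bar{D}(\bm{x}_\oplus)$ partition $[n]$, the sum $\bar{Z}^A_{\bm{x}_\oplus} = \bar{X}^A_{\bm{x}_\oplus} + \bar{Y}^A_{\bm{x}_\oplus}$ is exactly $|\{i \in [n] : a^i_A = y^i_{x^i_A}\}|$; for Frederick one checks that on $D(\bm{x}_\oplus)$ the identity $y^i_{x^i_\oplus \oplus x^i_A} = y^i_{x^i_A}$ holds (matching the definition of $X^F$), while on $\bar{D}(\bm{x}_\oplus)$ one has $y^i_{x^i_\oplus \oplus x^i_A} = y^i_{x^i_A \oplus 1}$ (matching $Y^F$), so $\bar{Z}^F_{\bm{x}_\oplus} = |\{i \in [n] : a^i_F = y^i_{x^i_\oplus \oplus x^i_A}\}|$. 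Consequently
\[
\mathcal{F}_\theta' \;\subseteq\; \{\bar{Z}^A_{\bm{x}_\oplus} > \theta n\} \cap \{\bar{Z}^F_{\bm{x}_\oplus} > \theta n\} \;\subseteq\; \{\bar{Z}_{\bm{x}_\oplus} > 2\theta n\}.
\]

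The assumption $\theta > B/2$, read with $B$ as the threshold of Lemma \ref{lem:small-event}, ensures $2\theta n > B$, so $\{\bar{Z}_{\bm{x}_\oplus} > 2\theta n\} \subseteq \{\bar{Z}_{\bm{x}_\oplus} > B\}$. Applying Lemma \ref{lem:small-event} (which is valid precisely because $\bm{x}_\oplus \in T(\eta)$) yields
\[
P_{\sim \mu''}(\mathcal{F}_\theta') \;\leq\; P_{\sim \mu''}(\bar{Z}_{\bm{x}_\oplus} > B) \;\leq\; \sum_{i=1}^{4} \epsilon_i(\eta) \;=\; \epsilon_2(\eta, \bm{x}_\oplus).
\]
Plugging back into the initial decomposition and using $\sum_{\bm{x}_\oplus \in T(\eta)} p(\bm{x}_\oplus) \leq 1$ together with the trivial bound by the maximum over $\bm{x}_\oplus \in T(\eta)$ delivers the claim with $\epsilon_2'(\eta) = \max_{\bm{x}_\oplus \in T(\eta)} \epsilon_2(\eta, \bm{x}_\oplus)$.

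The only delicate step, and hence the expected main obstacle, is the identification of $\bar{Z}^A_{\bm{x}_\oplus}$ and $\bar{Z}^F_{\bm{x}_\oplus}$ with the two counts appearing in the definition of $\mathcal{F}_\theta'$: one must carefully track the case split on $x^i_\oplus$ entering the definition of $Y^F_{i,\bm{x}_\oplus}$ so that Frederick's target bit $y^i_{x^i_\oplus \oplus x^i_A}$ is matched to $y^i_{x^i_A \oplus 1}$ on $\bar{D}(\bm{x}_\oplus)$. Once this bookkeeping is in place, the inclusion $\mathcal{F}_\theta' \subseteq \{\bar{Z}_{\bm{x}_\oplus} > B\}$ and Lemma \ref{lem:small-event} immediately give the pointwise bound, and a uniform maximum over the typical set closes the argument.
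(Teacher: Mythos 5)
Your proposal is correct and follows essentially the same route as the paper: decompose over typical $\bm{x}_\oplus$ via Eq.~(\ref{eq:accgivenxor}), include $\mathcal{F}_\theta'$ into $\{\bar{Z}_{\bm{x}_\oplus} > B\}$ using $2\theta > B$, and apply Lemma \ref{lem:small-event} before maximizing over the typical set. The only difference is that you spell out the identification of $\bar{Z}^A_{\bm{x}_\oplus}$ and $\bar{Z}^F_{\bm{x}_\oplus}$ with the two counts in the definition of $\mathcal{F}_\theta'$, which the paper asserts without detail; your bookkeeping there is correct.
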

\begin{proof}
	Let us fix $x_\oplus$. 	Then there is
	\begin{equation}
	\begin{split}
	\mathcal{F}_{\theta|\bm{x}_\oplus}'\equiv& \left\{S'(\bm{x}_\oplus): \bar{Z}^A_{\bm{x}_\oplus} \geq \theta
	\wedge \bar{Z}^F_{\bm{x}_\oplus} \geq \theta \right\}\\
	\subseteq& \left\{S'(\bm{x}_\oplus): \bar{Z}^A_{\bm{x}_\oplus} + \bar{Z}^F_{\bm{x}_\oplus} \geq 2\theta\right\}\\
	=& \left\{S'(\bm{x}_\oplus): \bar{Z}_{\bm{x}_\oplus} \geq B\right\}.
	\end{split}
	\end{equation}
	The above fact is consequences of the sequence of implications, where the last follows from $2 \theta > B$. We now invoke Eq. (\ref{eq:accgivenxor}) and note immediately a bound:
	
	\begin{equation}
	\begin{split}
	P_{\mu'}(\mathcal{F}'_{\theta} \cap T(\eta)) =& \sum_{\bm{x}_\oplus\in T(\eta)} P(\bm{x}_\oplus) P_{\mu''}(\mathcal{F}'_{\theta}|\bm{x}_\oplus)\\
	=&\sum_{\bm{x}_\oplus\in T(\eta)} P(\bm{x}_\oplus) \sum_{S'(\bm{x}_\oplus)\in \mathcal{F}'_{AF,\theta} }  P(S'(\bm{x}_\oplus))\\
	=& \sum_{\bm{x}_\oplus\in T(\eta)} P(\bm{x}_\oplus) \sum_{
		\{S'(\bm{x}_\oplus): \bar{Z}^A_{\bm{x}_\oplus}(S'(\bm{x}_\oplus))\geq \theta
		\wedge \bar{Z}^F_{\bm{x}_\oplus}(S'(\bm{x}_\oplus))\geq \theta\}} P(S'(\bm{x}_\oplus))\\
	\leq& \sum_{\bm{x}_\oplus\in T(\eta)} P(\bm{x}_\oplus) \sum_{
		\{S'(\bm{x}_\oplus): \bar{Z}_{\bm{x}_\oplus}(S'(\bm{x}_\oplus))\geq B\} } P(S'(\bm{x}_\oplus))\\ 
	\leq& \sum_{\bm{x}_\oplus \in T(\eta)}P(\bm{x}_\oplus)\epsilon_2(\eta, \bm{x}_\oplus)\\
	\leq& \max_{\bm{x}_\oplus \in T(\eta)} \sum_{i=1}^4 \epsilon_i(\eta, \bm{x}_\oplus)
	\eqqcolon  \epsilon_2(\eta).
	\end{split}
	\end{equation}
	In the pre-last inequality we have used Lemma \ref{lem:small-event} with $\epsilon_2(\eta) =\sum_{i=1}^4 \max_{\bm{x}_\oplus \in T(\eta)} \epsilon_i(\eta)$.
\end{proof}

\section{Proof of main Theorem \ref{thm:main}} \label{app:main}

Finally, after presenting all necessary definitions and lemmas, we are ready to state the complete version of Theorem \ref{thm:main} with the proof. Let
\begin{equation} \label{eq:treshold}
\begin{split}
\beta_\eta =&  P_Q\left(\frac{1}{2} + \eta\right) + \frac{M}{2} \left(\frac{1}{2}+\eta\right) + \eta\\
=& \left( \cos^2\left( \frac{\pi}{8} \right) + \frac{5 + \sqrt{3}}{8} \right) \left(\frac{1}{2} + \eta\right) +\eta\\
=&  \frac{8 \cos^2\left( \frac{\pi}{8} \right) + 5 + \sqrt{3}}{16} +    \frac{8\cos^2\left( \frac{\pi}{8} \right) + 5 + \sqrt{3} + 8}{8} \eta\\
=&  \frac{9 + 2\sqrt{2} + \sqrt{3}}{16} +    \frac{17 + 2\sqrt{2} + \sqrt{3}}{8} \eta\\	
\approx& 0.8475
+ 2.6950
\eta,
\end{split}
\end{equation} 
for some small \( \eta \) chosen in such a way that \( \beta_\eta \leq P_Q \). 
\begin{theorem}[Security of Semi-Device Independent Quantum Money] \label{thm:main}
	Let acceptance threshold \( \theta \) be larger than \(\beta_\eta n\).
	Then, under Assumptions A1-A7 (see Section \ref{ite:assumtions}), where \(k\) denotes number of Bank's branches the probability of a successful forgery \( P(\mathcal{F}_\theta) \) is exponentially small in number of banknote's qubits, and is bounded by   
	\begin{equation} \label{eq:pforge}
	P(\mathcal{F}_\theta) 
	\leq  10 k^2 \mathrm{e}^{-2\eta^2\left(\frac{1}{2} - \eta\right)n}.
	\end{equation}
\end{theorem}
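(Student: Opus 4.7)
The plan is to assemble the ingredients from Appendices \ref{app:def} and \ref{app:lem} and then lift the two-branch statement to the $k$-branch statement by a union bound. First I would argue that it suffices to bound the forgery probability for any fixed pair of branches. Indeed, under the qubit-by-qubit assumption (ASM5, ASM6), the joint strategy of Alice and Frederick that targets two specific branches induces a well-defined measure $\mu$ of the form in Eq. (\ref{eq:main-dist}), and by a straightforward union bound over all $\binom{k}{2}$ pairs a ``birthday'' attack that succeeds against some pair of branches happens with probability at most $\binom{k}{2}$ times the pairwise forgery probability. Since $\binom{k}{2}\le k^{2}/2$, a pairwise bound of $20\,e^{-2\eta^2(1/2-\eta)n}$ will already yield the stated bound $10\,k^{2}e^{-2\eta^2(1/2-\eta)n}$.

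For the pairwise bound I would chain the results of Appendix \ref{app:lem} in order. Corollary \ref{cor:accacc'} lets me replace the forgery event $\mathcal{F}_\theta$ by its XOR-image $\mathcal{F}'_\theta$, reducing the analysis to the distribution $\mu'$. Lemma \ref{lem:epstyp} then restricts attention to $\eta$-typical $\bm{x}_\oplus$ at a cost of $\epsilon(\eta)=2e^{-2\eta^2 n}$. For any such typical $\bm{x}_\oplus$, the sets $D(\bm{x}_\oplus)$ and $\bar D(\bm{x}_\oplus)$ both have size at least $(1/2-\eta)n$. Now I would invoke the structural Lemma \ref{lem:u-bound} and Corollary \ref{cor:deviation} on the four random variables $\bar X^A_{\bm{x}_\oplus},\bar X^F_{\bm{x}_\oplus},\bar Y^A_{\bm{x}_\oplus},\bar Y^F_{\bm{x}_\oplus}$ to control the deviation of the total number of guessed bits from its mean. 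Lemma \ref{lem:small-event} packages this into $P_{\sim\mu''}(\bar Z_{\bm{x}_\oplus}>B)\le 8 e^{-2\eta^2(1/2-\eta)n}$, using the typicality-based lower bound on $|D|,|\bar D|$ to upper-bound each $\epsilon_i(\eta)$ uniformly in $\bm{x}_\oplus\in T(\eta)$.

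Next I would verify the threshold condition: the hypothesis $\theta>\beta_\eta n$ in the theorem statement is precisely $2\theta>B$ once $B$ is written as $2\beta_\eta n$ (cf.\ Eq.\ (\ref{eq:treshold})), so Observation \ref{obs:ACCgiven} applies and gives
\begin{equation}
P(\mathcal{F}'_\theta\cap T(\eta))\le 8 e^{-2\eta^2(1/2-\eta)n}.
\end{equation}
Combining with Lemma \ref{lem:epstyp} and using $e^{-2\eta^2 n}\le e^{-2\eta^2(1/2-\eta)n}$ for $\eta\in(0,1/2)$ gives the pairwise bound of $10\,e^{-2\eta^2(1/2-\eta)n}$. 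Multiplying by $\binom{k}{2}\le k^2$ yields Eq.\ (\ref{eq:pforge}).

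The conceptually decisive step (the bound $2P_Q|D|+M|\bar D|$ on the sum of expected guess counts) and the concentration step were already done in Appendix \ref{app:lem}; in the final proof the main obstacle is purely bookkeeping. The trickiest point is ensuring that the exponents in the various error terms $\epsilon(\eta),\epsilon_1,\dots,\epsilon_4$ can all be uniformly dominated by $e^{-2\eta^2(1/2-\eta)n}$ for every typical $\bm{x}_\oplus$, which hinges on the lower bound $|D(\bm{x}_\oplus)|,|\bar D(\bm{x}_\oplus)|\ge (1/2-\eta)n$. I would state this uniform domination explicitly before summing the four terms and adding the typicality error, and then conclude by the union-bound argument over pairs of branches.
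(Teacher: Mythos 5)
Your proposal is correct and follows essentially the same route as the paper's own proof: the chain Corollary \ref{cor:accacc'} $\to$ Lemma \ref{lem:epstyp} $\to$ Observation \ref{obs:ACCgiven} (with the identification $B=2\beta_\eta n$ so that $\theta>\beta_\eta n$ is exactly the hypothesis $2\theta>B$), the uniform domination of all error terms by $e^{-2\eta^2(1/2-\eta)n}$ via $|D(\bm{x}_\oplus)|,|\bar D(\bm{x}_\oplus)|\ge(1/2-\eta)n$ for typical $\bm{x}_\oplus$, and the final union bound over pairs of branches contributing the factor $k^2$. The only difference is cosmetic bookkeeping in how the factor $\binom{k}{2}$ versus $k^2$ is absorbed, which does not affect the stated bound.
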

\begin{proof}
	Using Corollary \ref{cor:accacc'}, Lemma \ref{lem:epstyp}, and Observation \ref{obs:ACCgiven} we obtain the following bound on \( P(\mathcal{F}) \)
	\begin{equation}
	\begin{split}
	P_{\sim \mu}(\mathcal{F}) &\stackrel{Eq. (\ref{eq:accacc'})}{=} P_{\sim \mu'}(\mathcal{F}')\\
	&\stackrel{Eq. (\ref{eq:epstyp})}{\leq} P_{\sim \mu'}(\mathcal{F}' \cap T(\eta)) + \epsilon(\eta)\\
	&\stackrel{Eq. (\ref{eq:ACCgiven})}{\leq} \max_{\bm{x}_\oplus \in T(\eta)} \epsilon_2(\eta, \bm{x}_\oplus) + \epsilon(\eta)\\
	&\leq 5 \epsilon_\eta = 10 \mathrm{e}^{-2\eta^2\left(\frac{1}{2} - \eta\right)n}.
	\end{split}
	\end{equation}
	
	Since there are many Bank's branches, collaborating Alice and Frederick can use birthday attack in order to choose two branches that have the biggest common set \(D\). For \(k\) branches we apply union bound obtaining another factor \( k^2 \) what finalizes the proof.
\end{proof}

\section{Proof of No-go for device independent quantum money} \label{app:nogo}

In this section we will provide more formal proof of Observation \ref{obs:nogo}.

\begin{proof}
	We will follow the standard proof technique used in device independent scenarios. Since all of the quantum devices are untrusted the only way for Bank's branches to verify the money is to use classical statistics of inputs and outputs from quantum black boxes. Without loss of generality we can assume four-partite scenario with two Bank's branches \( B_1 \), \( B_2 \), Alice \( A \), and Frederick \(F\) that all share some nonlocal box. We will denote device's inputs as \(I\) and outputs as \(O\) with appropriate subscript denoting the party. The general scenario can be modeled by probability distribution of the form
	\begin{equation}
	D=P(O_{B_1}, O_{B_2}, O_A, O_F | I_{B_1}, I_{B_2}, I_A, I_F).
	\end{equation}
	In verification phase Alice tries to pass verification with branch \(B_1\), while Frederick tries with branch \(B_2\). Since Bank's branches cannot communicate, they have access only to the part of outputs.	
	Therefore, the only way is to check correlations from distribution
	\begin{equation}
	D_1=P(O_{B_1}, O_A | I_{B_1}, I_A, I_{B_2}, I_F)
	\end{equation}
	for the branch \( B_1 \) and
	\begin{equation}
	D_2=P(O_{B_2}, O_F | I_{B_2}, I_F, I_{B_1}, I_A)
	\end{equation}
	for the branch \( B_2 \). We assume here that Alice and Frederic can freely talk during the verification stage.
	The money scheme, in order to be secure have to disallow both Alice and Frederick to pass verification. Furthermore, the conditions for passing verification have to be the same for all branches.  
	Since there must exist a honest quantum implementation (see Appendix \ref{aap:honest}) for Alice with distribution \( H_1 \), then Frederick could also pass verification using the same distribution \( D_2=H_1 \). To obtain such result, Adversary, controlling the source and measurement devices, can prepare joined device with distribution \(D\) simply as \( H_1 \otimes H_1 \).
	Such preparation of state and measurement will always break any device independent money scheme and cannot be detected by the Bank in any way without communication between the branches of the Bank.
\end{proof}

\section{Honest implementation} \label{aap:honest}

In this appendix we will present honest implementation based on Semi-Device Independent Quantum Key Distribution \cite{sdi}. Let, for all runs \(i\), the honest source prepare all states  \( \rho^i_{y^i_0,y^i_1} \) in the following way
\begin{equation}
\rho_{00} \coloneqq \ket{0}\bra{0},\quad \rho_{01} \coloneqq \ket{-}\bra{-},\quad \rho_{10} \coloneqq \ket{+}\bra{+},\quad \rho_{11} \coloneqq \ket{1}\bra{1}
\end{equation}
where \( \ket{\pm} \coloneqq (\ket{0} \pm \ket{1}) / \sqrt{2} \). Let us also choose an appropriate measurement \( M_{x^i} \), depending on the branch's question \(x^i\),
\begin{equation}
M_0 \coloneqq \frac{\sigma_z + \sigma_x}{\sqrt{2}} ,\quad M_1 \coloneqq \frac{\sigma_z - \sigma_x}{\sqrt{2}},
\end{equation}
where \( \sigma_x \) and \( \sigma_z \) are Pauli matrices 
\begin{equation}
\sigma_x \coloneqq \begin{bmatrix}
0 & 1\\
1 & 0
\end{bmatrix},\quad
\sigma_z \coloneqq \begin{bmatrix}
1 & 0\\
0 & -1 
\end{bmatrix}.
\end{equation}
It turns out, as shown in \cite{sdi}, that, using these states and measurements, the optimal guessing probability equals \( P_Q = \cos^2(\pi/8) \).
\begin{remark}[Connection with Wiesne'sr money scheme]
	In the honest implementation of our scheme we use the same states as in the original Wiesner scheme. On the other hand the measurement settings have to be different since, from \cite{sdi}, we know that Wiesner's can not be used in the semi-device independent approach.
\end{remark}

\section{Required number of qubits} \label{app:mem}

This appendix establishes the relation between the number of qubits and the upper bounds on the probability of forgery.
Since from the Eq. (\ref{eq:treshold}) we know that \( \beta_\eta \leq P_Q \) we can calculate that the maximal allowed value of \( \eta \)  equals
\begin{equation}
	\eta_{\text{max}} \coloneqq \frac{-1 +2\sqrt{2} - \sqrt{3}}{34 + 4\sqrt{2} + 2\sqrt{3}} \approx 0.0022.
\end{equation}
When we put that value into Eq. (\ref{eq:pforge}), for the trivial case of the single Bank without any additional branches we obtain that 
\begin{equation}
	P(\mathcal{F}) \leq  10 \mathrm{e}^{-2\eta_{\text{max}}^2\left(\frac{1}{2} - \eta_{\text{max}}\right)n}.
\end{equation}
It is easy to calculate numerically that this bound becomes trivial when the number of qubits \(n\) is smaller than 463018. Furthermore when we demand that the probability of forgery is smaller than some security parameter and we want to assume more realistic scenario the number of required qubits grows significantly. 

Although one cannot expect that such a large number of qubits will be available in quantum memories soon, let us emphasize that the bounds used in the proof of Theorem \ref{thm:main} are not tight, and there is some room for improvement.
What is more important, we expect that using a more complex random access codes i.e., ones with more inputs and outputs can lead to significant decrease of the number of required qubits as it is discussed in Section \ref{sec:discusion}.

\twocolumngrid

\bibliography{SDIQMbibliography}

\end{document}